    \numberwithin{equation}{section}
    \def\Re{{\rm Re \,}}
    \def\Im{{\rm Im \,}}
    \def\bigO{{\cal O}}
    \def\Res{{\rm Res}}
    \def\sech{{\rm sech}}
    \newcommand{\ra}{\rightarrow}
    \def\P2n{{\rm P}_{{\rm II}}^{(n)}}
    \newtheorem{theorem}{Theorem}[section]
    \newtheorem{proposition}[theorem]{Proposition}
    \newtheorem{Definition}[theorem]{Definition}
    \newtheorem{Remark}[theorem]{Remark}
    \newenvironment{remark}{\begin{Remark}\rm}{\end{Remark}}
    \newtheorem{Example}[theorem]{Example}
    \newtheorem{Assumptions}[theorem]{Assumptions}
    \newcommand{\e}{\epsilon}
\newcommand{\lb}{\lambda}
    \newenvironment{proof}%
    {\rm \trivlist \item[\hskip \labelsep{\bf Proof. }]}%
    {\hspace*{\fill}$\Box$\endtrivlist}
    {\rm \trivlist \item[\hskip \labelsep{\bf Proof}]}%
    {\hspace*{\fill}$\Box$\endtrivlist}
\begin{document}
\title{Solitonic asymptotics for the Korteweg-de Vries equation in the small dispersion limit}
\author{T. Claeys and T. Grava}
\maketitle
\begin{abstract}
We study the small dispersion limit for the Korteweg-de Vries (KdV) equation $u_t+6uu_x+\epsilon^{2}u_{xxx}=0$ in a critical scaling regime where $x$ approaches the trailing edge of the region where the KdV solution shows oscillatory behavior. Using the Riemann-Hilbert approach, we obtain an asymptotic expansion for the KdV solution in a double scaling limit, which shows that the oscillations degenerate to sharp pulses near the trailing edge. Locally those pulses resemble soliton solutions of the KdV equation.
\end{abstract}

\section{Introduction}

We consider the Cauchy problem for the
Korteweg-de Vries (KdV) equation
\begin{equation}\label{KdV}
u_t+6uu_x+\epsilon^{2}u_{xxx}=0, \qquad u(x,t=0,\e)=u_0(x),\qquad \e>0, t>0,
\end{equation}
in the small dispersion limit where $\epsilon\to 0$.
We consider real analytic negative initial data with sufficient decay at infinity and with a single negative hump.
For small $t>0$, the solution to this problem can be approximated \cite{LL, DVZ} by the solution to the Cauchy problem for the (dispersionless) Hopf equation
\begin{equation}
u_t+6uu_x=0, \qquad u(x,0)=u_0(x),\qquad t>0,
\end{equation}
which can be solved using the method of characteristics.
At time
\[
t_c=\dfrac{1}{\max_{\xi\in\mathbb{R}}[-6u'_0(\xi)]},
\]
the Hopf equation reaches a point of gradient catastrophe where the derivative of the Hopf solution blows up.
For $t>t_c$, the Hopf solution only has a multi-valued continuation.
However, the KdV solution is well-defined for all positive $t$ and $\e>0$: the dispersive term $\epsilon^{2}u_{xxx}$ regularizes the gradient catastrophe.
For $t$ slightly bigger than $t_c$, the KdV solution $u(x,t,\epsilon)$ develops rapid oscillations \cite{LL, FRT} that in the limit $\epsilon\rightarrow 0$ are confined in a certain interval $[x^-(t),x^+(t)]$, see Figure \ref{fig1}.
\begin{figure}[!htb]\label{fig1}
\begin{center}
\includegraphics[width=5in]{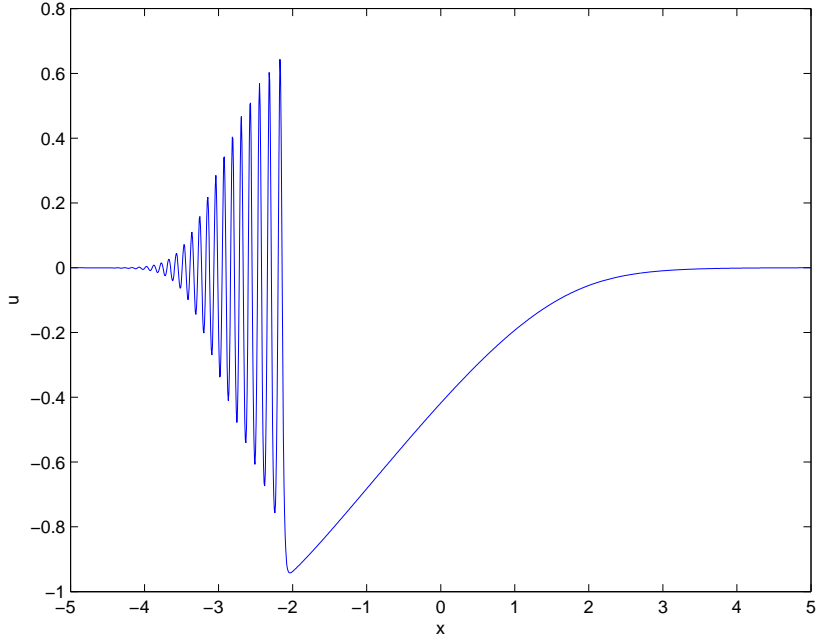}
%\textwidth}
\end{center}
\caption{Solution of the KdV equation with $\e=10^{-2}$, initial data $u_0(x)=-\mbox{sech}^2(x)$, and $t=0.4$. }
\end{figure}
In the $(x,t)$-plane, the oscillations take place in a cusp-shaped region (which depends on the initial data) as illustrated in Figure 2.
\begin{figure}[t]\label{figcusp}
\begin{center}
\includegraphics[scale=0.35]{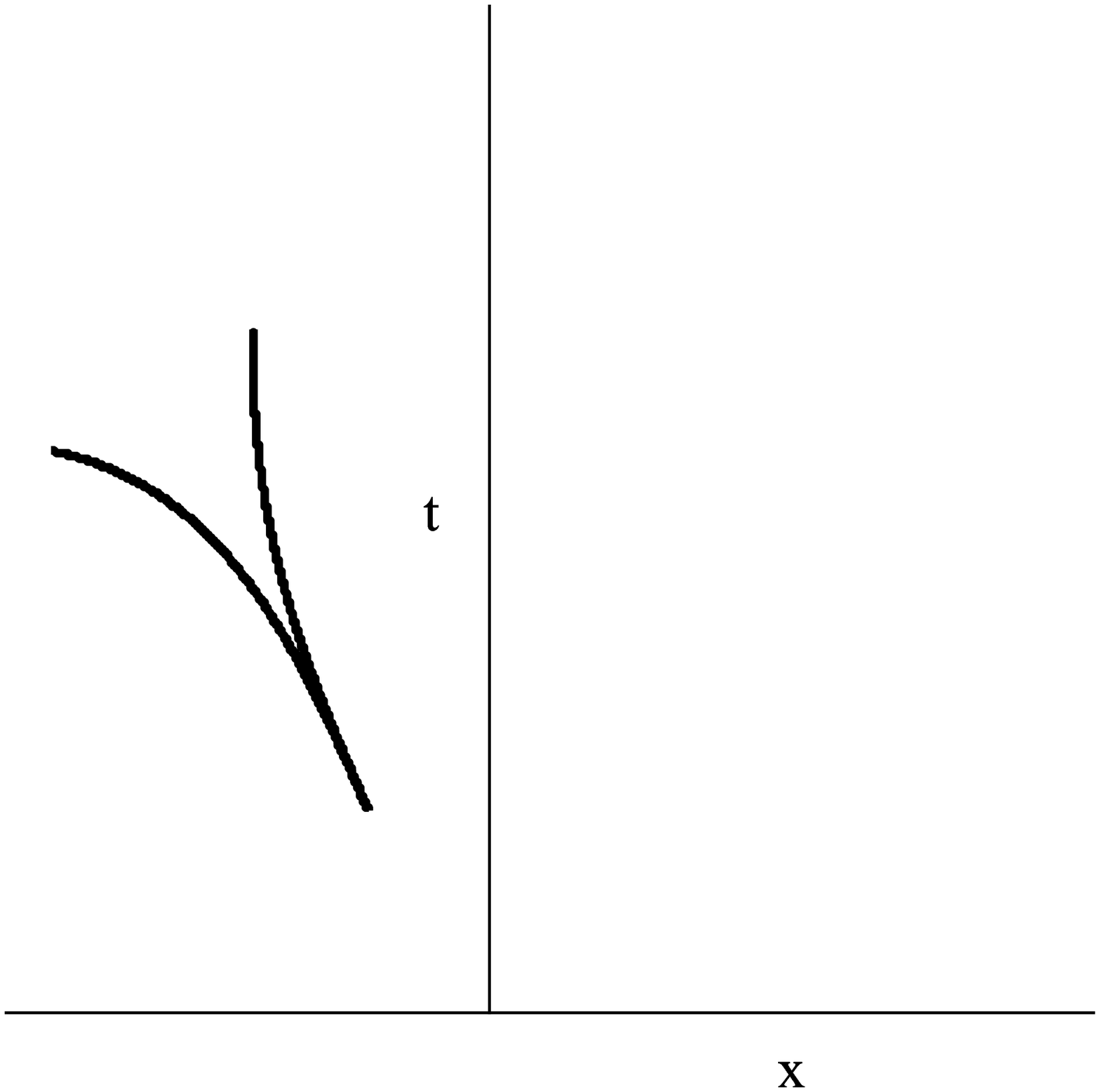}
\end{center}
\caption{Cusp-shaped oscillatory region in the $(x,t)$-plane for $t>t_c$.}
\end{figure}
Inside the cusp-shaped region for $t$ slightly bigger than $t_c$, the KdV solution $u(x,t,\epsilon)$ can be written, asymptotically for small $\epsilon$, in terms of the Jacobi elliptic theta function and the complete elliptic integrals of the first and second kind $E(s)$ and $K(s)$ \cite{LL, DVZ, V2, GP}:
\begin{equation}
\label{elliptic} u(x,t,\e)\simeq \beta_1+\beta_2+\beta_3+2\alpha+
2\e^2\frac{\partial^2}{\partial
x^2}\ln\vartheta(\Omega(x,t);\mathcal{T}),
\end{equation}
where $\Omega$, $\alpha$, and $\mathcal{T}$ have the form
\begin{align}
&\label{Omega}
\Omega(x,t)=\dfrac{\sqrt{\beta_1-\beta_3}}{2\e K(s)}[x-2 t(\beta_1+\beta_2+\beta_3) -q],
\\
&\label{alpha}
\alpha(s)=-\beta_{1}+(\beta_{1}-\beta_{3})\frac{E(s)}{K(s)},\;\;\mathcal{T}=i\dfrac{K'(s)}{K(s)},
\;\; s^{2}=\frac{\beta_{2}-\beta_{3}}{\beta_{1}-\beta_{3}}.
\end{align}
Note that $K'(s)=K(\sqrt{1-s^{2}})$, and
 $\vartheta$ is defined by the
Fourier series
\[
\vartheta(z;\mathcal{T})=\sum_{n\in\mathbb{Z}}e^{\pi i n^2\mathcal{T}+2\pi i nz}.
\]
The formula for $q$ in the phase $\Omega$ in (\ref{Omega}) is equal to \cite{GK2, DVZ}
\begin{equation}
\label{q0}
    q(\beta_{1},\beta_{2},\beta_{3}) = \frac{1}{2\sqrt{2}\pi}
    \int_{-1}^{1}\int_{-1}^{1}d\mu d\nu \frac {f_L( \frac{1+\mu}{2}(\frac{1+\nu}{2}\beta_{1}
    +\frac{1-\nu}{2}\beta_{2})+\frac{1-\mu}{2}\beta_{3})}{\sqrt{1-\mu}
    \sqrt{1-\nu^{2}}},
\end{equation}
where $f_L(y)$ is the inverse function of the decreasing part of the initial data $u_0$. Formula (\ref{elliptic}) can be written also in terms of Jacobi elliptic function $\mbox{dn}$  in the form
\begin{equation}
\label{elliptic1}
\begin{split}
u(x,t,\e)&\simeq \beta_2+\beta_3-\beta_1+2\dfrac{\beta_1-\beta_2}{\mbox{dn}^2(2K(s)\Omega)}\\
&\simeq
\beta_2+\beta_3-\beta_1+2(\beta_1-\beta_3)\mbox{dn}^2(2K(s)\Omega+(2k+1)K(s)),
\end{split}
\end{equation}
for any integer $k$ since $\mbox{dn}(u;s)$ is periodic with period
$2K(s)$.

For constant values of the $\beta_i$'s, the right hand side of (\ref{elliptic1}) is an exact solution of KdV.
However in  the description
of the leading order asymptotics of $u(x,t,\e)$ as $\e\ra 0$,
 the numbers $\beta_1>\beta_2>\beta_3$ depend on $x$ and $t$  and evolve
 according to the Whitham equations \cite{W}
\begin{equation}
\label{Whitham}
\dfrac{\partial}{\partial t}\beta_i+v_i\dfrac{\partial}{\partial x}\beta_i=0,\quad   v_{i}=4\frac{\prod_{k\neq
     i}^{}(\beta_{i}-\beta_{k})}{\beta_{i}+\alpha}+2(\beta_1+\beta_{2}+\beta_{3}),\;\;\; i=1,2,3,
\end{equation}
%where the speeds $v_i$ are given by the formula
%\begin{equation}
   % \label{eq:la0}
%\end{equation}
with $\alpha$ as in (\ref{alpha}).

The Whitham equations  (\ref{Whitham}) can be integrated through the
so-called hodograph transform, which generalizes the method of characteristics,
and which gives the solution in the implicit form \cite{T}
\begin{equation}
\label{hodograph}
x=v_it+w_i,\quad i=1,2,3,
\end{equation}
where  the $v_i$'s  are defined in (\ref{Whitham}) and $w_i=w_i(\beta_1,\beta_2,\beta_3)$ for $i=1,2,3$
is obtained from an algebro-geometric procedure by the formula \cite{FRT}
\begin{equation}
    w_{i} =
    \frac{1}{2}\left(v_{i}-2\sum_{k=1}^{3}\beta_{k}\right)\frac{
    \partial q}{\partial\beta_{i}}+q,\quad i=1,2,3,
    \label{eq:w}
\end{equation}
with $q$ defined in (\ref{q0}).
 The formula (\ref{q0}) for $q$ is valid as long as $\beta_3$ does not reach the minimal value of the initial data $u_0$. When $\beta_3$ reaches the negative hump it is also necessary to take into account the increasing part of the initial data $f_R$. However,  formula (\ref{q0}) is sufficient for the purpose of this manuscript.

\medskip

Near the boundary of the oscillatory cusp-shaped region, neither the Hopf asymptotics nor the elliptic asymptotics are satisfactory. Three different transitional regimes can be distinguished: (1) the cusp point where the gradient catastrophe for the Hopf equation takes place and where $\beta_1=\beta_2=\beta_3$, (2) the leading edge of the oscillatory zone where $\beta_2=\beta_3$, and (3) the trailing edge of the oscillatory zone where $\beta_1=\beta_2$.

Near the point of gradient catastrophe, it was conjectured \cite{Dubrovin, Dubrovin1} and proved afterwards \cite{CG} that the asymptotics for the KdV solution are given in terms of
a distinguished Painlev\'e transcendent, namely a special smooth solution $U(X,T)$ to the fourth order ODE
\[
X=T\, U -\left[ \dfrac{1}{6}U^3  +\dfrac{1}{24}( U_{X}^2 + 2 U\, U_{XX} )
+\frac1{240} U_{XXXX}\right],
\]
which is the second member of the first Painlev\'e hierarchy. In a double scaling limit where $\e\to 0$ and simultaneously $x$ and $t$ approach the point and time of gradient catastrophe $x_c$ and $t_c$ at an appropriate rate, the KdV solution has an expansion of the following form \cite{CG},
\[
u(x,t,\e)=u_c +\left(\dfrac{2\epsilon^2}{k^2}\right)^{1/7}
U \left(
\dfrac{x- x_c-6u_c (t-t_c)}{(8k\epsilon^6)^{\frac{1}{7}}},
\dfrac{6(t-t_c)}{(4k^3\epsilon^4)^{\frac{1}{7}}}\right) +\bigO\left( \epsilon^{4/7}\right),
\]
where $k$ is a constant depending on the initial data. This expansion holds for negative real analytic initial data with a single negative hump and sufficient decay at infinity, but it is conjectured by Dubrovin that it has a universal nature and extends to all Hamiltonian perturbations of hyperbolic systems near the point of gradient catastrophe. It is remarkable that the function $U(X,T)$ itself is a solution to the (re-scaled) KdV equation
$U_T+UU_X+\frac{1}{12}U_{XXX}=0$.

Near the leading edge at the left of the zone where the oscillations appear (see Figure \ref{fig1}), numerical results \cite{GK2} showed that the amplitude of the oscillations is asymptotically described by the Hasting-McLeod solution to the second Painlev\'e equation. For $t$ slightly bigger than $t_c$, in a double scaling limit where $\e\to 0$ and simultaneously $x$ approaches the leading edge $x^-$ at an appropriate rate, it was proved in \cite{CG2} that
\[
\label{expansionu}
u(x,t,\e) = u-\dfrac{4\e^{1/3}}{c^{1/3}}A\left[-\frac{x-x^-}{c^{1/3}\sqrt{u-v}\,\e^{2/3}}\right]
\cos\left(\frac{\Theta(x,t)}{\epsilon}\right)
+\bigO(\e^{2/3}),
\]
with $A$  the Hasting-McLeod solution to the second Painlev\'e equation,  the phase
\[\Theta(x,t)=2\sqrt{u-v}(x-x^-)+2\int_{v}^{u}(f_L'(\xi)+6t)\sqrt{\xi-v}d\xi,\]
and $x^-(t)$, $u(t)$ and $v(t)$ are obtained from (\ref{hodograph}) in the limit $\beta_2=\beta_3=v$.
This expansion holds, just like near the gradient catastrophe, for negative real analytic initial data with a single negative hump and sufficient decay at infinity. It is natural to expect that it is, to a certain extent, universal for a whole class of equations.

Our aim is to give an asymptotic description of the KdV solution
near the trailing edge. Here the behavior of the KdV solutions is
genuinely different compared to the Painlev\'e type behaviors near
the point of gradient catastrophe and the leading edge. This is not
surprising when considering Figure \ref{fig1}, where we observe that
the amplitude of the oscillations is of order $\bigO(1)$ near  the
trailing edge, whereas it smoothly decays towards the leading edge.
We will show that, in the limit where $\e\to 0$, the last
oscillations at the right end of the oscillatory zone behave like
solitons which are, in the local scale, at a large distance away
from each other. Recall that the KdV equation admits soliton
solutions of the form $a\,\sech^2(bx-ct)$. We should note that the
trailing edge asymptotics we will obtain show remarkable
similarities with recently obtained critical asymptotics for the
focusing nonlinear Schr\"odinger equation \cite{BT}.

The trailing edge $x^+(t)$ of the oscillatory interval (i.e.\ the right edge of the cusp-shaped region in Figure 2) is uniquely determined by the equations \cite{FRT, GT}
\begin{align}
&\label{trailing1}x^+(t)=6tu(t)+f_L(u(t)),\\
&\label{trailing2}6t+\theta(v(t);u(t))=0,\\
&\label{trailing3}\int_{u(t)}^{v(t)}(6t+\theta(\lambda;u(t)))\sqrt{\lambda-u(t)}d\lambda=0.
\end{align}
Here $v(t)>u(t)$ and
\begin{equation}
\label{theta}
\theta(v;u)=\dfrac{1}{2\sqrt{v-u}}\int_{u}^v\dfrac{f'_L(\xi)d\xi}{\sqrt{v-\xi}},
\end{equation}
and $f_L$ is the inverse of the decreasing part of $u_0$. System
(\ref{trailing1})-(\ref{trailing3}) is the confluent form of the
Whitham equations where $\beta_1=\beta_2$.

\medskip

Before stating our result, let us take a look at the elliptic
asymptotics for KdV in the limit $\beta_1\rightarrow\beta_2$. Note
that this is a formal limit of the asymptotic expansion for KdV and
that there is no rigorous argument to justify that this actually
leads to a good approximation for the KdV solution near the trailing
edge. The phase defined in (\ref{Omega}) behaves like
\[
2K(s)\Omega\simeq\dfrac{x-x^+}{\epsilon}\sqrt{v-u},\qquad \mbox{as
$\beta_1-\beta_2\to 0$},
\]
and
\[
K(s)\simeq\dfrac{1}{2}\ln\left[\dfrac{8}{1-s}\right],\quad
\mbox{as}\;\;s\rightarrow 1.
\]
In addition the Jacobi elliptic function $\mbox{dn}(u;s)\rightarrow
\sech(u)$ as the modulus $s\rightarrow 1$. Therefore the formal
limit of the elliptic solution (\ref{elliptic1}) as $s\rightarrow
1$, $\beta_1,\beta_2\rightarrow v$, $\beta_3\rightarrow u$, gives
\begin{equation}\label{expansion formal}
u(x,t,\epsilon)\simeq
u+2(v-u)\,\sech^2\left[\dfrac{x-x^+}{\epsilon}\sqrt{v-u}+(k+\dfrac{1}{2})\ln\left[\dfrac{8}{1-s}\right]\right]
.
\end{equation}
We would like to underline that the limit we computed in the above
expression has appeared many times in the literature, starting from
the seminal paper of Gurevich-Pitaevski \cite{GP}  but the important
second term of the phase in (\ref{expansion formal}) which was
calculated later  by Deift, Venakides and Zhou \cite{DVZ}, was  and
is still ignored in some literature. When taking the limit $s\to 1$
for fixed $\epsilon$, the solitonic $\sech^2$-term in
(\ref{expansion formal}) is of order $\bigO(\beta_1-\beta_2)$.
However it is clear from Figure \ref{fig1} that the amplitude of the
oscillations near $x^+$ is  of order $\bigO(1)$. This indicates that
it is necessary to perform a double scaling limit letting
$\epsilon\rightarrow 0$ and $x\rightarrow x^+$ simultaneously at an
appropriate rate. If we want to capture the top of an oscillation
close to the trailing edge, we should take the double scaling limit
in such a way that the phase of the $\sech^2$ term is zero for some
integer $k$. If $k\geq 0$, this will turn out to be consistent with
the rigorous asymptotic expansion for $u$ we will obtain below in
Theorem \ref{main theorem}, up to a phase shift.

\medskip

Similarly as in \cite{CG, CG2}, we consider initial data $u_0(x)$ which satisfy the following conditions.
\begin{Assumptions}\label{assumptions}\
\begin{itemize}
\item[(a)] $u_0(x)$ is real analytic and has an analytic continuation to the complex plane in a domain of the form
\[
\mathcal{S}=\{z\in\mathbb{C}: |\Im z|<\tan\theta_0|\Re z|\}\cup\{z\in\mathbb C: |\Im z|<\sigma\},
\]
for some $0<\theta_0<\pi/2$ and $\sigma>0$,
\item[(b)] $u_0(x)$ decays as $x\rightarrow\infty$ in $\mathcal{S}$ such that
\begin{equation}
u_0(x)=\bigO\left(\frac{1}{|x|^{3+s}}\right), \;\;s>0,
\end{equation}
\item[(c)] for real $x$,  $u_0(x)$ is negative and it has a single local minimum at a certain point
$x_M$, with
\[u_0'(x_M)=0, \qquad u_0''(x_M)>0,\] for simplicity we assume that $u_0$ is normalized such
that $u_0(x_M)=-1$,
\item[(d)] the point of gradient catastrophe for the Hopf equation is 'generic' in the sense that
\begin{equation}
\label{genericity}
f_L'''(u_c)\neq 0,
\end{equation}
where $f_L$ is the inverse of the decreasing part of the initial data $u_0$.
\end{itemize}
\end{Assumptions}
Because of condition (b), we will be able to apply direct and inverse scattering theory, and to setup a Riemann-Hilbert (RH) problem for the Cauchy problem of KdV. Conditions (a) and (c) will enable us to keep control over the reflection coefficient in the small dispersion limit; condition (d) will be needed only in the core of the RH analysis, but is nevertheless needed for the result stated below.

\begin{theorem}\label{main theorem}
Let $u_0(x)$ be initial data for the Cauchy problem of the KdV equation
satisfying Assumptions \ref{assumptions}, and let $x^+=x^+(t)$, $u=u(t)$, and $v=v(t)$ solve the system (\ref{trailing1})-(\ref{trailing3}). There exists $T>t_c$ such that for $t_c<t<T$, we have the following expansion for the KdV solution $u(x,t,\e)$ as $\e\to 0$,
\begin{equation}\label{expansion u}
u\left(x^++\frac{\e\ln\e}{2\sqrt{v-u}}y,t,\e\right)=u+2(v-u)\sum_{k=0}^{\infty}\sech^2(X_k)+\bigO(\e\ln^2\e),
\end{equation}
where
\begin{equation}
\begin{split}\label{Xk}
&X_k=\frac{1}{2}(\frac12-y+k)\ln\e-\ln(\sqrt{2\pi} h_k)-(k+\frac12)\ln\gamma,\\
&h_k=\dfrac{2^{\frac{k}{2}} }{\pi^{\frac{1}{4}}\sqrt{k!}},\quad \gamma=4(v-u)^{\frac{5}{4}}\sqrt{-\partial_v\theta(v;u)},
\end{split}
\end{equation}
and $\theta$ is given by (\ref{theta}). This expansion holds point-wise for $y\in\mathbb R$ and uniformly for $y$ bounded.
\end{theorem}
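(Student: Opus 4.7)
The natural approach is the Riemann--Hilbert (RH) formulation of the inverse scattering transform for KdV. One starts from a $2\times 2$ matrix RH problem $Y(\lambda; x, t, \epsilon)$ with jumps on the real line built from the reflection coefficient $r(\lambda;\epsilon)$ of the Schr\"odinger operator $-\epsilon^2\partial_x^2 + u_0$, whose solution recovers $u(x,t,\epsilon)$ from its large-$\lambda$ expansion. Under Assumptions \ref{assumptions} the reflection coefficient is semiclassical and admits a WKB expansion in the relevant region. The first step is to apply the Deift--Zhou nonlinear steepest descent scheme developed for the oscillatory interior in \cite{CG2}: a $g$-function tied to the Whitham moduli $\beta_1>\beta_2>\beta_3$, the sequence $Y\to T\to S$ of normalization at infinity, factorization of the oscillatory jump and opening of lenses, and construction of a theta-function parametrix reproducing the elliptic asymptotics \eqref{elliptic1} in the bulk.

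At the trailing edge $\beta_1,\beta_2\to v$ the elliptic parametrix degenerates: two branch points collide in the spectral plane. The plan is to replace it by two matched ingredients: an outer parametrix given by a pure multi-soliton RH problem whose poles are quantized spectral levels of spacing $O(\epsilon)$, and a local parametrix near the collision point built from parabolic cylinder (Hermite--Weber) functions. In the double scaling $x = x^+ + \frac{\epsilon\ln\epsilon}{2\sqrt{v-u}}y$ the exponential factor $e^{2ig/\epsilon}$ on the collapsing contour $[\beta_2,\beta_1]$ is sharply peaked, and its product with the WKB reflection coefficient concentrates on a geometric sequence of near-poles. The Hermite--Weber local analysis identifies the amplitudes of these near-poles with the Hermite normalization $h_k = 2^{k/2}/(\pi^{1/4}\sqrt{k!})$, arising naturally from the Gaussian saddle structure of $r(\lambda;\epsilon)$ at the collision point, while the constant $\gamma = 4(v-u)^{5/4}\sqrt{-\partial_v\theta(v;u)}$ comes from the local behavior of $g$.

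The outer $N$-soliton RH problem admits an explicit closed-form solution; when the $k$-th norming constant is assembled from the Hermite factor $\sqrt{2\pi}\,h_k$, the power $\gamma^{k+1/2}$, and the residual WKB phase in $\epsilon$, its logarithm reproduces exactly $X_k$ as in \eqref{Xk}. Because the effective soliton spacing on the $x$-scale is $\epsilon\ln\epsilon/(2\sqrt{v-u})$ (which dictates the rescaling by $y$), individual solitons are asymptotically non-interacting and contribute additively, giving $u + 2(v-u)\sum_{k\geq 0}\sech^2(X_k)$. A standard small-norm argument on the ratio between the exact RH solution and the combined outer/local parametrix yields the $O(\epsilon\ln^2\epsilon)$ error, uniformly for $y$ in bounded sets.

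The main obstacle is the rigorous transition from a continuous reflection coefficient to an effective discrete soliton spectrum in the critical scaling. Concretely this requires: a uniform parabolic cylinder parametrix valid simultaneously for all the relevant quantized levels; a sharp identification of the Hermite factor $h_k$ from the saddle point expansion of $r(\lambda;\epsilon)$; matching conditions on a moving annulus around the collision point that are compatible with the multi-soliton outer model; and error estimates fine enough to retain the $\ln^2\epsilon$ remainder. The logarithmic scaling is itself forced by the analysis, since the real-axis spacing between consecutive quasi-solitons translates into a $\ln\epsilon$ shift on the $x$-scale, which becomes $O(1)$ after multiplication by $\epsilon\ln\epsilon/(2\sqrt{v-u})$.
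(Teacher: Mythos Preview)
Your overall framework (RH problem, Deift--Zhou steepest descent, semiclassical reflection coefficient) is right, but the core construction differs from the paper in an essential way that makes your plan harder to execute than necessary.

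You propose starting from the elliptic $g$-function of the oscillatory interior, letting $\beta_1,\beta_2\to v$ collide, and then replacing the degenerating theta parametrix by a \emph{multi-soliton} outer model with a geometric sequence of quantized near-poles, each handled by a parabolic cylinder local parametrix. The paper does none of this. Instead it works from the outset with the \emph{one-cut} $G$-function built on the single endpoint $u$ (the same $G$ one would use in the Hopf region), so there is no theta function and no collision of branch points. The trailing-edge phenomenon appears through a single special point $v$ where the phase $\widehat\phi(\lambda;x^+,t)$ acquires a \emph{double} zero. The outer parametrix $P_k^{(\infty)}$ is the standard one-cut model modified by a factor $D(\lambda)^{\mp k\sigma_3}$ giving a pole of order $k$ at $v$ --- not an array of simple soliton poles. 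The local parametrix at $v$ is the Fokas--Its--Kitaev RH problem for degree-$k$ Hermite orthogonal polynomials with respect to $e^{-x^2}$; the constant $h_k$ is nothing but the leading coefficient of the orthonormal Hermite polynomial, not the output of a saddle-point expansion of $r(\lambda;\epsilon)$. The integer $k$ is chosen as the nearest integer to $y$, and the infinite sum in the final formula is cosmetic: for each $y$ at most two terms exceed the error.

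Your multi-pole outer model would force you to control an $N$-soliton RH problem with $N\sim y$ poles and to match it uniformly against as many local parametrices on a moving annulus; this is precisely the ``main obstacle'' you flag, and the paper avoids it entirely. The key insight you are missing is that a single Hermite local parametrix at one point $v$, with its degree jumping by one each time $y$ crosses a half-integer, already encodes all the solitons; the ``birth of a cut'' mechanism from random matrix theory (see the paper's references \cite{BertolaLee, C, Mo}) is the correct analogy, not a soliton gas.
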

\begin{figure}[t]\label{fig cartoon}
\begin{center}
\includegraphics[width=3in, height=2in]{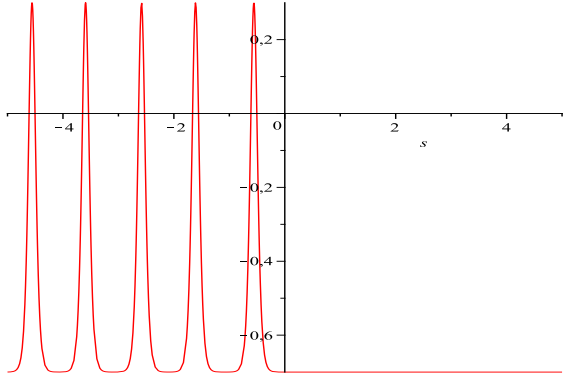}
%\textwidth}
\end{center}
\caption{Cartoon picture of
$u\left(x^++\frac{\e\ln\e}{2\sqrt{v-u}}y,t,\e\right)$ as a function
of $s=-y$ for $\e=10^{-5}$, or in other words a magnified version of
Figure \ref{fig1} near the trailing edge with the $x$-axis stretched
with a factor $\bigO(1/(\e|\ln\e|))$. One observes narrow pulses
near the negative half integers.}
\end{figure}
\begin{remark}
Observe that each term in the sum of (\ref{expansion u}) generates a pulse with amplitude $2(v-u)$ for $y$ near a half positive integer (as shown in Figure 3), which can be seen as a soliton. The term $\sech^2(X_k)$ is centered at $y\approx k+\frac{1}{2}$, and it already decreased to order $\bigO(\epsilon^{\frac{1}{2}})$ for $y$ near $k$ and $k+1$. Near $k-\frac{1}{2}$ and $k+\frac{3}{2}$, the contribution of $\sech^2(X_k)$ is absorbed by the error term $\bigO(\e\ln^2\e)$.
For any $y$, the infinite sum in (\ref{expansion u}) thus reduces to the sum of the two solitons centered closest to $y$. The sum of all the other terms is of order $\bigO(\e\ln^2\e)$.
Values of $y$ for which $y\leq -\frac{1}{2}$ lead us out of the oscillatory zone: the contribution of the sum of solitons to (\ref{expansion u}) is small, in this case we have $u(x,t,\e)=u(t)+\bigO(\e\ln^2\e)$.
\end{remark}
\begin{remark}
Whereas the function $u$ is clearly decreasing for $x$ to the right of the trailing edge in Figure \ref{fig1}, this is not visible in Figure 3 or in the asymptotic expansion (\ref{expansion u}). The reason is that this effect is of order $\e\ln\e$ in the local variable $y$. On the other hand, it is not clearly visible in Figure \ref{fig1} that the oscillations degenerate to sharp pulses towards the trailing edge. This is a consequence of the fact that the length of the pulses scales with $\bigO(1/|\ln\e|)$, which decreases slowly as $\e\to 0$.
\end{remark}
\begin{remark}
Analyzing the Whitham equations carefully in the limit
$\beta_1\to\beta_2$, it is possible to deduce an expansion of the
form (\ref{expansion u}) from (\ref{expansion formal}), but the
formally obtained $X_k$ misses a constant term (depending on $k$)
compared to (\ref{Xk}).
\end{remark}
\begin{remark}
The time $T>t_c$ appearing in Theorem \ref{main theorem} should be sufficiently small such that Proposition \ref{prop phi} holds. Loosely speaking, this means that the $G$-function which we will construct below is not allowed to have any singular behavior for $t_c<t<T$.
\end{remark}
\begin{remark}The techniques we will use to prove Theorem \ref{main theorem} are very close to the ones used in \cite{BertolaLee, C, Mo}, where orthogonal polynomials on the real line with respect to a critical exponential weight were studied. Those polynomials describe the birth of a cut in unitary random matrix ensembles, see also \cite{Eynard}. Although the focus in \cite{BertolaLee, C, Mo} was on the random matrix eigenvalues rather than on the orthogonal polynomials, we expect that the recurrence coefficients for the associated orthogonal polynomials admit an asymptotic expansion similar to (\ref{expansion u}) in an appropriate double scaling limit.
\end{remark}

\medskip

In Section \ref{section 2}, we will set up the RH problem for the KdV equation, and we will recall asymptotic results
 about the reflection coefficient in the small dispersion limit.
 In Section \ref{section 3} we will analyze asymptotically the RH problem using the Deift/Zhou steepest descent method.
 At the heart of the analysis lies the construction of a local parametrix built out of Hermite polynomials.
 The degree of the Hermite polynomials depends on the scaling variable $y$.
 The transitions where the degree of the Hermite polynomials increases takes place for $y$ near positive half integers
 and are responsible for the presence of the spikes in the asymptotic behavior of the KdV solution $u$.

\section{RH problem for KdV and reflection coefficient}\label{section 2}

It is well-known that solutions to the KdV equation can be expressed in terms of a RH problem. This fact relies on the direct and inverse scattering transform \cite{BDT, Faddeev}. Consider the following RH problem.
\subsection{RH problem for $M$}
\begin{itemize}
\item[(a)] $M(\lb;x,t,\e):\mathbb C\backslash \mathbb{R}\to \mathbb C^{2\times 2}$ is analytic. \item[(b)] $M$ has continuous boundary values $M_+(\lb)$ and $M_-(\lb)$ when approaching $\lambda\in\mathbb R\setminus\{0\}$ from above and below, and
\begin{align*}&M_+(\lb)=M_-(\lb){\small \begin{pmatrix}1&r(\lb;\e)
e^{2i\alpha(\lambda;x,t)/\e}\\
-\bar r(\lb;\e)e^{-2i\alpha(\lb;x,t)/e}&1-|r(\lb;\e)|^2
\end{pmatrix}},&\mbox{ for $\lb<0$,}\\
&M_+(\lb)=M_-(\lb)\sigma_1,\quad
\sigma_1=\begin{pmatrix}0&1\\1&0\end{pmatrix},&\mbox{ for $\lb>0$},
\end{align*}
with $\alpha$ given by
\begin{equation}\label{def alpha1}\alpha(\lambda;x,t)=4t(-\lambda)^{3/2}+x(-\lambda)^{1/2},\end{equation} defined with the branches of $(-\lambda)^{\theta}$ which are analytic in $\mathbb C\setminus [0,+\infty)$ and positive for $\lambda<0$.
Furthermore for fixed $x,t,\e$, we impose $M(\lb)$ to be bounded near $0$.
\item[(c)] $M(\lb;x,t,\e)=\left(I+\bigO(\lambda^{-1})\right)
\begin{pmatrix}1&1\\&\\i\sqrt{-\lb}&-i\sqrt{-\lb}\end{pmatrix}\;\;$
as $\lb\rightarrow \infty$.
\end{itemize}
In particular we can write
\[M_{11}(\lb;x,t,\e)=1+\dfrac{M_{1, 11}(x,t,\e)}{\sqrt{-\lb}}+\bigO(1/\lb),\qquad\mbox{ as $\lb\to\infty$}.\]
In general it is true that, if the RH problem for $M$ is solvable in a neighborhood of $x_0$ and $t_0$, then the function
\begin{equation}\label{uM}
u(x,t,\e):=-2i\e\partial_x M_{1, 11}(x,t,\e),
\end{equation}
is, locally near $x_0$ and $t_0$, a solution to the KdV equation. Modifying the function $r(\lambda;\e)$ in the jump matrix leads to different KdV solutions.
If $r(\lambda;\e)$ is the reflection coefficient corresponding to the Schr\"odinger equation \[\e^2\frac{d^2}{dx^2}f=u_0(x)f,\] (with $u_0$ satisfying Assumptions \ref{assumptions}, but also under much milder conditions)
then $u$ is the solution to the Cauchy problem (\ref{KdV}) for KdV with initial data $u_0$. In this case the RH problem is solvable for all $x\in\mathbb R$ and $t\geq 0$; its solution can be constructed using fundamental solutions to the Schr\"odinger equation.

Our main task in what follows, is to approximate the RH solution $M$ asymptotically for $\e\to 0$ and $x$ close to the trailing edge. To this end, we need estimates for the reflection coefficient $r(\lambda;\e)$ as $\e\to 0$.

\subsection{Asymptotics for the reflection coefficient}\label{section refl}
Semiclassical asymptotics for the reflection coefficient as $\e\to 0$ were obtained in \cite{Ramond, FujiieRamond} for initial data $u_0$ that are such that Assumptions \ref{assumptions} hold: we have
\begin{align}
&r(\lambda;\e)= ie^{-\frac{2i}{\e}\rho(\lambda)}(1+\bigO(\e)),&\mbox{ for $\lambda\in(-1,0)$,}\\
&r(\lambda;e)=\bigO(e^{-\frac{c}{\e}}),&\mbox{ for $\lambda\in(-\infty,-1)$,}
\end{align}
where
$\rho$ is given by
\begin{equation}\label{def rho}
\rho(\lb)=f_L(\lambda)\sqrt{-\lambda}+\int_{-\infty}^{f_L(\lambda)}[\sqrt{u_0(x)-\lambda}-\sqrt{-\lambda}]dx.
\end{equation}
Moreover, there is a sector containing $\mathbb R^-$ such that $r(\lambda;\e)$ is analytic for $\lambda$ in this sector \cite{FujiieRamond}.
Let us write
\begin{equation}\label{def kappa}
\kappa(\lambda;\e)=-ir(\lb;\e)e^{\frac{2i}{\e}\rho(\lambda)}, \qquad\mbox{ for $\lb\in\Omega_+$},
\end{equation}
where
\begin{equation}\label{def Omega}\Omega_+:=\{\lb\in\mathbb C:-1-\delta<\Re\lb<-\delta, 0<\Im\lambda<\delta, |\lambda+1|>\frac{\delta}{2}\}.\end{equation}
For sufficiently small $\delta>0$, we have the following as $\e\to 0$, see \cite{Ramond} and \cite[Section 2.2]{CG2}
\begin{align}
&\label{kappa1}\kappa(\lb;\e)=1+\bigO(\e),&\mbox{ uniformly for $\lb\in\Omega_+$,}\\
&\label{ref2}r(\lb;\e)=\bigO(e^{-\frac{c}{\e}}), &\mbox{ for $\lb<-1-\delta$, $c>0$.}
\end{align}
In addition
\begin{equation}
\label{tau}
1-|r(\lambda;\epsilon)|^2\sim e^{-\frac{2}{\epsilon}\tau(\lambda)}, \qquad\mbox{ for $-1+\delta<\lambda<0$,}
\end{equation}
with
\begin{equation}\label{deftau}
\tau(\lambda)=\int_{f_L(\lambda)}^{f_R(\lambda)}\sqrt{\lambda - u_0(x)}dx,
\end{equation}
and $f_R$ is the inverse of the increasing part of the initial data $u_0$.
Those estimates for the reflection coefficient are essential for the RH analysis we will perform in the next section.

\section{Asymptotic analysis of the RH problem}\label{section 3}

We study the RH problem for $M$ asymptotically in the small dispersion limit when $x$ is close to the trailing edge.
Our approach is based on the Deift/Zhou steepest descent method \cite{DZ1} which has been applied to the KdV RH problem
in \cite{DVZ}. We follow roughly the same lines as in \cite{CG, CG2}, where the RH problem was studied near the point
of gradient catastrophe and near the leading edge. Compared to the situation near the leading edge,
the main difference here is the construction of the local parametrix near the point $v$,
which will be built out of Hermite polynomials.
\subsection{$G$-function and transformation $M\mapsto T$}
Let us define, for $t_c<t<T$, $G=G(\lambda;x,t)$ by
\begin{equation}
G(\lambda)=\frac{(u - \lambda)^{1/2}}{\pi}
\int_{u}^0\frac{\rho(\eta)-\alpha(\eta;x,t)}{\sqrt{\eta-u}(\eta
-\lambda)}d\eta,\label{def G}
\end{equation}
where $u=u(t)$ is the solution to (\ref{trailing1})-(\ref{trailing3}), and $\alpha$ and $\rho$ given by (\ref{def rho}) and (\ref{def alpha1}).
The square root $(u - \lambda)^{1/2}$ is analytic for $\lambda\in\mathbb C\setminus [u,+\infty)$ and positive for $\lambda<u$.
$G$ has the asymptotic behavior
\begin{equation}\label{g0}G(\lambda;x)=G_1(x,t)\frac{1}{(-\lambda)^{1/2}}+\bigO(\lambda^{-1}),\qquad \mbox{ as $\lambda\to\infty$},\end{equation}
with
\begin{eqnarray}\label{g1}
{G}_1(x,t)=\frac{1}{\pi}
\int_{u}^0\dfrac{\rho(\eta)-\alpha(\eta;x,t)}{\sqrt{\eta-u}}d\eta.
\end{eqnarray}
Since $\rho(\eta)$ is independent of $x$, we have
\begin{equation}
\partial_x G_1(x,t)=-\frac{1}{\pi}\int_{u}^0\sqrt{\frac{-\eta}{\eta-u}}d\eta=\dfrac{u}{2}.
\label{dg1}
\end{equation}
$G$ is analytic for $\lambda\in\mathbb C\setminus [u,+\infty)$ and it satisfies the properties
\begin{align}
&\label{Gjump1}G_+(\lb)+G_-(\lb)-2\rho(\lb)+2\alpha(\lb)=0, &\mbox{ as $\lb\in (u,0)$},\\
&\label{Gjump2}G_+(\lb)+G_-(\lb)=0, &\mbox{ as $\lb\in (0,+\infty)$}.
\end{align}
In order to modify the jumps for $M$ in a convenient way without losing its normalization at infinity, we define
\begin{equation}
\label{def T}
T(\lambda)=M(\lambda)e^{-\frac{i}{\e}G(\lambda;x)\sigma_3}.
\end{equation}
Then, using the RH conditions for $M$, we obtain the following RH problem for $T$.

\subsubsection*{RH problem for $T$}
\begin{itemize}
\item[(a)] $T$ is analytic in $\mathbb C\setminus \mathbb R$,
\item[(b)] $T_+(\lambda)=T_-(\lambda)v_T(\lambda)$, \ \ \ as
$\lambda\in\mathbb R$, with
\begin{equation}\label{vT}
v_T(\lambda)=\begin{cases}
\begin{array}{ll}
\sigma_1, &\mbox{ as $\lambda>0$,}\\[0.8ex]
\begin{pmatrix}1&r(\lb)e^{\frac{2i}{\e}(
G(\lb)+\alpha(\lb))}\\
-\bar r(\lb)e^{-\frac{2i}{\e}(G(\lb)+\alpha(\lb))} &1-|r(\lb)|^2
\end{pmatrix},&\mbox{ as $\lambda\in (-\infty,u)$,}\\[3ex]
\begin{pmatrix}e^{-\frac{i}{\e}(G_+(\lb)-G_-(\lb))}
&r(\lb)e^{\frac{2i}{\e}\rho(\lb)}\\
-\bar{r}(\lb)e^{-\frac{2i}{\e}\rho(\lb)}&(1-|r(\lb)|^2)e^{\frac{i}{\e}(
G_+(\lb)-G_-(\lb))}
\end{pmatrix},&\mbox{ as $\lambda\in (u,0)$,}%\\
\end{array}
\end{cases}
\end{equation}
\item[(c)]
$T(\lambda)=(I+\bigO(\lambda^{-1}))\begin{pmatrix}1&1\\
i\sqrt{-\lambda}&-i\sqrt{-\lambda}\end{pmatrix}$ as
$\lambda\to\infty$.
\end{itemize}
By (\ref{uM}), (\ref{dg1}), and (\ref{def T}), we obtain the identity
\begin{equation}\label{uS1}
u(x,t,\e)=u-2i\e \partial_{x}T_{1,11}(x,t,\e),
\end{equation}
where
\begin{equation}
T_{11}(\lb;x,t,\e)=1+\frac{T_{1,11}(x,t,\e)}{\sqrt{-\lb}}+\bigO(\lb^{-1}), \qquad \mbox{ as $\lb\to\infty$.}
\end{equation}

\subsection{Opening of lenses $T\mapsto S$}

\begin{figure}[t]
\begin{center}
    \setlength{\unitlength}{1.2mm}
    \begin{picture}(137.5,26)(22,11.5)
        \put(112,25){\thicklines\circle*{.8}}
        \put(45,25){\thicklines\circle*{.8}}
        \put(47,26){$-1-\delta$}
        \put(112,26){$0$}
        \put(90,25){\thicklines\circle*{.8}} \put(74,26){$u$}\put(90,26){$v$}
        \put(102,25){\thicklines\vector(1,0){.0001}}
        \put(90,25){\line(1,0){45}}
        \put(124,25){\thicklines\vector(1,0){.0001}}
        \put(22,25){\line(1,0){23}}
        \put(35,25){\thicklines\vector(1,0){.0001}}
        \put(75,25){\line(1,0){15}}
        \put(84,25){\thicklines\vector(1,0){.0001}}
        \qbezier(45,25)(60,45)(75,25) \put(61,35){\thicklines\vector(1,0){.0001}}
        \qbezier(45,25)(60,5)(75,25) \put(61,15){\thicklines\vector(1,0){.0001}}
%\qbezier(75,25)(82.5,35)(90,25) \put(83,30){\thicklines\vector(1,0){.0001}}
%        \qbezier(75,25)(82.5,15)(90,25) \put(83,20){\thicklines\vector(1,0){.0001}}
\end{picture}
\caption{The jump contour $\Sigma_S$ after opening of the lens}
\end{center}
\label{figure: lens}
\end{figure}
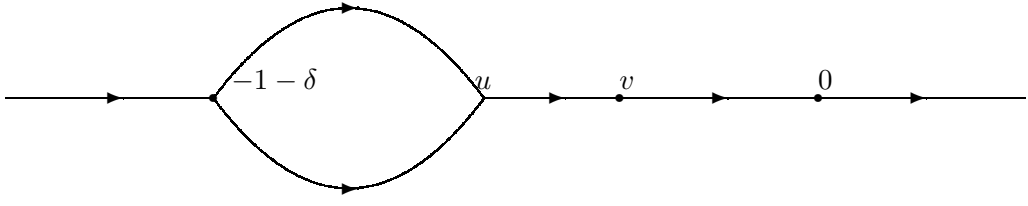

Let us define the function $\phi$ by
\[
\phi(\lambda;x,t)={G}(\lambda;x,t)-\rho(\lambda)+\alpha(\lambda;x,t),
\]
such that $\phi$ is analytic in a neighborhood of $(-1,u)$, with an analytic extension to the whole region $\Omega_+$ defined in (\ref{def Omega}).
Under the condition that (\ref{trailing1}) holds, it was shown in \cite[Lemma 3.2]{CG} that
\begin{equation}
\label{phi}
\phi(\lambda;x,t)=\sqrt{u-\lambda}(x-x^+)+\int_{\lambda}^{u}(f_L'(\xi)+6t)\sqrt{\xi-\lambda}d\xi,
\end{equation}
with, as usual, principal branches of the square roots.
It then follows that
\begin{equation}\label{phiprime}
\phi'(\lambda;x,t)=-\dfrac{1}{2\sqrt{u-\lambda}}(x-x^+)-\sqrt{u-\lambda}[6t+\theta(\lambda;u)],
\end{equation}
where $\theta(\lambda;u)$ is given by (\ref{theta}).
In a complex neighborhood of $v$ it is convenient to define $\widehat\phi$ as the analytic continuation of $-i\phi$ from the upper half plane,
\begin{multline}\label{hatphi}
\widehat\phi(\lambda;x,t)=\mp i\phi(\lambda; x,t)\\=-\sqrt{\lambda-u}(x-x^+)+\int_u^{\lambda}(f_L'(\xi)+6t)\sqrt{\lambda-\xi}d\xi, \qquad \mbox{ as $\pm\Im\lambda>0$.}
\end{multline}
We then have using (\ref{trailing2})-(\ref{trailing3}),
\begin{equation}
\widehat \phi(v;x^+,t)=0, \qquad \widehat\phi'(v;x^+,t)=0, \qquad \widehat\phi''(v;x^+,t)=\sqrt{v-u}\, \theta'(v;u)<0.
\end{equation}
Now we can express the jump matrix for $T$ in terms of $\phi$ and $\widehat\phi$:
\begin{align}\label{vT2}
&v_T(\lambda)=
\begin{pmatrix}e^{\frac{2}{\e}\widehat\phi(\lb)}
&i\kappa(\lambda;\e)\\
i\kappa^*(\lambda;\e)&(1-|r(\lb)|^2)e^{-\frac{2}{\e}\widehat\phi(\lb)}
\end{pmatrix},&\mbox{ as $\lambda\in (u,0)$,}
\\
&\label{vT3}
v_T(\lambda)=
\begin{pmatrix}1&i\kappa(\lb;\e)e^{\frac{2i}{\e}\phi(\lb)}\\
i\kappa^*(\lb;\e)e^{-\frac{2i}{\e}\phi^*(\lb)} &1-|r(\lb)|^2
\end{pmatrix},&\mbox{ as $\lambda\in [-1-\delta,u)$,}
\end{align}
where $\phi^*(\lambda)=\overline{\phi(\bar\lambda)}$, and $\phi(\lb)$ and $\phi^*(\lb)$ should be understood as the $+$ boundary values on $(-1-\delta,-1)$.

Similarly as in \cite[Proposition 3.1]{CG2}, $\phi$ and $\widehat\phi$ satisfy a number of inequalities.
\begin{proposition}\label{prop phi}
There exists a time $T>t_c$ such that for $t_c<t<T$, we have
\begin{equation}
\label{phiinequalities}
\begin{array}{ll}
\widehat\phi(\lambda;x^+)<0,&\mbox{ as $\lambda\in (u,0]\setminus\{v\}, $}\\[1.5ex]
\phi'(\lb;x^+,t)>0,&\mbox{ as $\lb\in[-1,u)$,}\\[1.5ex]
-\tau(\lambda)-\widehat\phi(\lb;x^+,t)<0,&\mbox{ as $\lb\in (u,0]$.}
\end{array}
\end{equation}
\end{proposition}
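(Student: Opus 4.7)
The plan is to derive all three inequalities from the explicit integral representations (\ref{phi})--(\ref{hatphi}) at $x=x^+$, combined with a perturbative argument based at $t=t_c$ that exploits the genericity hypothesis $f_L'''(u_c)\neq 0$.

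At $x=x^+$ the formulas reduce to
\[
\widehat\phi(\lambda; x^+, t) = \int_u^\lambda \bigl(f_L'(\xi)+6t\bigr)\sqrt{\lambda-\xi}\,d\xi,\qquad \phi'(\lambda; x^+, t) = -\sqrt{u-\lambda}\,\bigl[6t+\theta(\lambda;u)\bigr].
\]
The equality $\widehat\phi(u;x^+,t)=0$ is automatic; (\ref{trailing2}) yields $\widehat\phi'(v;x^+,t)=0$; and a Fubini calculation shows (\ref{trailing3}) is equivalent to $\widehat\phi(v;x^+,t)=0$. Hence $v$ is a double zero of $\widehat\phi(\cdot; x^+, t)$ while $u$ is a $3/2$-power endpoint zero, and one may factor
\[
\widehat\phi(\lambda; x^+, t) = -(\lambda-u)^{3/2}(\lambda-v)^2 H(\lambda; t)
\]
with $H$ analytic in a neighborhood of $[u, 0]$. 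Inequality (1) is then equivalent to $H>0$ on $[u,0]$, and inequality (2) is equivalent to $6t+\theta(\lambda;u)<0$ on $[-1, u)$.

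I would next verify the sign conditions at $t=t_c$ and propagate them by continuity. At a generic catastrophe $6t_c+f_L'(u_c)=0$ and $f_L''(u_c)=0$, and the existence of a three-sheeted multi-valued Hopf solution for $t>t_c$ forces $f_L'''(u_c)<0$. Taylor-expanding in $\xi$ around $u_c$ gives
\[
\widehat\phi(\lambda; x_c, t_c) = \tfrac{8}{105}\,f_L'''(u_c)\,(\lambda-u_c)^{7/2}\bigl(1+o(1)\bigr),\qquad \lambda\to u_c^+,
\]
which is strictly negative, and an analogous expansion yields $6t_c+\theta(\lambda;u_c)<0$ on $[-1,u_c)$. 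With the standard square-root scaling $u_c-u(t),\,v(t)-u_c\sim (t-t_c)^{1/2}$ near the catastrophe and continuity of $H$ and $\theta$ in $(\lambda, t)$, inequalities (1) and (2) persist in an interval $(t_c,T)$.

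For inequality (3), since $\tau(\lambda)>0$ on $(-1,0)$ and the factored form shows $|\widehat\phi(\lambda; x^+, t)|$ tends to $0$ uniformly on compacta as $t\to t_c$, the comparison $\tau>|\widehat\phi|$ holds in the interior of $(u,0)$ for $t$ close to $t_c$. The main obstacle is the comparison near the endpoint $\lambda=u$, where $\widehat\phi$ vanishes like $(\lambda-u)^{3/2}$; one must control both sides to leading order and ensure that $\tau(u)$ stays bounded below as $t\to t_c$. After shrinking $T$ if necessary, a local Taylor analysis of both $\tau$ and $\widehat\phi$ at $\lambda=u$, together with the perturbative control from the previous step, closes the proof in the same spirit as \cite[Prop.~3.1]{CG2}.
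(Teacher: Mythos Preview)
Your treatment of the first two inequalities is close to the paper's. Both arguments hinge on the fact that at $t=t_c$ the function $6t_c+\theta(\lambda;u_c)$ has a double zero at $u_c$ and is strictly negative elsewhere on $[-1,0]$; the paper then counts zeros of $6t+\theta(\lambda;u(t))$ for $t$ slightly above $t_c$ (one at $v$ by (\ref{trailing2}), another in $(u,v)$ forced by (\ref{trailing3})), concluding $6t+\theta<0$ on $[-1,u)\cup(v,0]$, from which both inequalities follow directly. Your factorization $\widehat\phi=-(\lambda-u)^{3/2}(\lambda-v)^2H$ is a legitimate repackaging of the same information, though you should note that positivity of $H$ on all of $[u_c,0]$ at $t=t_c$ requires $f_L'(\xi)+6t_c\leq 0$ for $\xi\in[u_c,0]$ (which holds since $u_c$ maximizes $f_L'$), not just the local Taylor expansion at $u_c$.

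For the third inequality your approach is unnecessarily indirect, and the paper's argument is both simpler and stronger. The paper computes directly
\[
-\tau(\lambda)-\widehat\phi(\lambda;x^+,t)=-4t(\lambda-u)^{3/2}+\int_{-1}^{u}\sqrt{\lambda-\xi}\,f_L'(\xi)\,d\xi-\int_{-1}^{\lambda}\sqrt{\lambda-\xi}\,f_R'(\xi)\,d\xi,
\]
obtained by writing $\tau$ via the substitutions $x=f_L(\xi)$ and $x=f_R(\xi)$. Each of the three terms is manifestly negative ($t>0$, $f_L'<0$, $f_R'>0$), so the inequality holds for all $t$ in the relevant range without any perturbative limit. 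Your worry about the endpoint $\lambda=u$ is misplaced: $\tau(u)>0$ while $\widehat\phi(u;x^+,t)=0$, so there is no competition there; and the perturbative argument you sketch, while it could be completed, would yield only a possibly smaller $T$ and leaves the actual comparison unwritten.
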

\begin{proof}
If (\ref{genericity}) holds, the function $6t_c+\theta(\lambda;u(t_c))$ has a double zero at $\lambda=u(t_c)$ and is strictly negative elsewhere on $[-1,0]$, see \cite{CG}. For $t$ slightly bigger than $t_c$, because of the smoothness of $u(t)$, $6t+\theta(\lambda;u(t))$ can have at most two zeroes on $[-1,0]$. It follows from (\ref{trailing2}) that one zero lies at $\lambda=v(t)$. Because of (\ref{trailing3}), another zero must lie in between $u$ and $v$. Consequently we have that $6t+\theta(\lambda;u(t))$ is negative for $\lb\in[-1,u)\cup (v,0]$ for sufficiently small times after the time of gradient catastrophe. Now it is straightforward to verify the first and the second inequaltity.

\medskip

For the last inequality, it is straightforward to verify by (\ref{deftau}) that
\[-\tau(\lambda)-\widehat\phi(\lb;x^-)=
-4t(\lambda-u)^{\frac{3}{2}}
+\int_{-1}^{u}\sqrt{\lambda-\xi}f'_L(\xi)d\xi
-\int_{-1}^{\lambda}\sqrt{\lambda-\xi}f'_R(\xi)d\xi,
\]
where  $f_R$ is the inverse function of the increasing part of the initial data $u_0(x)$.
The inequality follows directly because each of the three terms on the right hand side is negative.
\end{proof}

On the interval $(-1-\delta,u)$, we can factorize $v_T$:
\begin{equation}
v_T(\lb)=\begin{pmatrix}1&0\\
i{\kappa}^*(\lb)e^{-\frac{2i}{\e}
\phi_-(\lb)}&1
\end{pmatrix}\begin{pmatrix}1&i\kappa(\lb)e^{\frac{2i}{\e}\phi_+(\lb)}\\
0&1
\end{pmatrix}.
\end{equation}
Using this factorization, we can open lenses as shown in Figure \ref{figure: lens}. We choose the lenses such that the upper lens $\Sigma_1$ lies in $\Omega_+$ (for some fixed but sufficiently small $\delta>0$) and such that the lower lens $\Sigma_2$ is the complex conjugate of $\Sigma_1$.
Define
\begin{equation*}
S(\lambda)=\begin{cases}\begin{array}{ll}
T(\lambda)\begin{pmatrix}1&-i\kappa(\lb;\e)e^{\frac{2i}{\e}\phi(\lb;x)}\\
0&1
\end{pmatrix},&\mbox{ in the upper part of the lens,}\\[3ex]
T(\lambda)\begin{pmatrix}1&0\\
i\kappa^*(\lb;\e)e^{-\frac{2i}{\e}\phi^*(\lb;x)}&1
\end{pmatrix},&\mbox{ in the lower part of the lens},\\[3ex]
 T(\lambda), &\mbox{ elsewhere}.
\end{array}
\end{cases}
\end{equation*}
By the RH problem for $T$, we obtain modified RH conditions for $S$.
\subsubsection*{RH problem for $S$}
\begin{itemize}
\item[(a)] $S$ is analytic in $\mathbb C\setminus \Sigma_S$,
\item[(b)] $S_+(\lambda)=S_-(\lambda)v_S$ for $\lambda\in\Sigma_S$,
with
\begin{equation}\label{vS}
v_S(\lambda)=\begin{cases}
\begin{array}{lr}
\begin{pmatrix}1& i\kappa(\lb)e^{\frac{2i}{\e}\phi(\lb)}\\
0&1
\end{pmatrix},&\mbox{ on $\Sigma_1$},\\[3ex]
 \begin{pmatrix}1&0\\
i\kappa^*(\lb;\e)e^{-\frac{2i}{\e}\phi^*(\lb;x)}&1
\end{pmatrix},&\mbox{ on $\Sigma_2$,}\\[3ex]
\begin{pmatrix}e^{\frac{2}{\e}\widehat\phi(\lb;x)}
&i\kappa(\lambda;\e)\\
i\kappa^*(\lambda;\e)&(1-|r(\lb)|^2)e^{-\frac{2}{\e}\widehat\phi(\lb;x)}
\end{pmatrix},&\mbox{ as $\lambda\in (u,0)$,}\\[3ex]
\sigma_1,&\mbox{\hspace{-3cm} as
$\lambda\in(0,+\infty)$,}\\[3ex]
v_T(\lambda;x),&\mbox{\hspace{-3cm} as
$\lambda\in(-\infty,-1-\delta)$.}
\end{array}
\end{cases}
\end{equation}
\item[(c)] $S(\lambda)=(I+\bigO(\lambda^{-1}))\begin{pmatrix}1&1\\
i\sqrt{-\lambda}&-i\sqrt{-\lambda}\end{pmatrix}$ as
$\lambda\to\infty$.
\end{itemize}
Since $T(\lb)=S(\lb)$ for large $\lb$, we have, using (\ref{uS1}),
\begin{equation}\label{uS}
u(x,t,\e)=u-2i\e \partial_{x}S_{1,11}(x,t,\e),
\end{equation}
where
\begin{equation}
S_{11}(\lb;x,t,\e)=1+\frac{S_{1,11}(x,t,\e)}{\sqrt{-\lb}}+\bigO(\lb^{-1}), \qquad \mbox{ as $\lb\to\infty$.}
\end{equation}

The construction of $S$ has been organized in such a way that the jump matrices for $S$ decay uniformly to constant matrices, except in arbitrary small neighborhoods of $u$ and $v$.
\begin{proposition}\label{prop jumps}
There exists $T>t_c$ such that the following holds for $t_c<t<T$. For any fixed neighborhoods $U_u$ of $u$ and $U_v$ of $v$, there exists $\delta_0>0$ such that for $|x-x^+|<\delta_0$, the jump matrices $v_S(\lambda)=(I+\bigO(\e))v^{(\infty)}(\lambda;x,\e)$ uniformly on $\Sigma_S\setminus (U_u\cup U_v\cup\{0\})$ if $\e\to 0$, with
\begin{equation}\label{vinfty}
v^{(\infty)}(\lambda)=\begin{cases}
\sigma_1,&\mbox{ on $(0, +\infty)$,}\\
i\sigma_1,&\mbox{ on $(u,0)$,}\\
I, &\mbox{elsewhere}.\end{cases}
\end{equation}
\end{proposition}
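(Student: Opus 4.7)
The plan is to check the asymptotic behavior of $v_S$ piece by piece on each component of $\Sigma_S$, using the sign estimates from Proposition \ref{prop phi} together with the reflection coefficient asymptotics (\ref{kappa1})--(\ref{tau}). In each case one shows that the jump either equals $v^{(\infty)}$ exactly or differs from it by a matrix that is exponentially small in $\e$, uniformly away from $U_u \cup U_v \cup \{0\}$.

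For the two lens contours $\Sigma_1$ and $\Sigma_2$, the off-diagonal entries are controlled by $e^{\pm 2i\phi/\e}$ times $\kappa$ or $\kappa^*$. By (\ref{kappa1}), $\kappa$ is bounded on $\Omega_+$, so the key point is to show $\Im\phi > 0$ on $\Sigma_1$ (and by symmetry $\Im\phi^* > 0$ on $\Sigma_2$), with a uniform positive lower bound away from the endpoints $-1-\delta$ and $u$. Proposition \ref{prop phi} gives $\phi'(\lambda;x^+,t) > 0$ on $[-1,u)$, and since $\phi$ is real on this interval, the Cauchy--Riemann equations imply $\partial_y \Im\phi = \phi' > 0$ at $y=0$. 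Together with the compactness of any subinterval $[-1-\delta,u-\delta']$ and continuity of $\phi$ in $x$ near $x^+$, one can choose $\Sigma_1 \subset \Omega_+$ so that $\Im\phi$ is bounded below by a positive constant on $\Sigma_1\setminus U_u$. The off-diagonal entries are then $\bigO(e^{-c/\e})$ for some $c>0$, which dominates the $\bigO(\e)$ error from $\kappa$.

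On $(u,0)\setminus(U_u\cup U_v\cup\{0\})$, the off-diagonal entries of $v_S$ equal $i\kappa$ and $i\kappa^*$, which by (\ref{kappa1}) are $i + \bigO(\e)$, matching the off-diagonal entries of $i\sigma_1$. For the $(1,1)$ entry, $\widehat\phi < 0$ away from $v$ and $u$ by the first inequality of Proposition \ref{prop phi}, so $e^{2\widehat\phi/\e}$ is exponentially small. For the $(2,2)$ entry, combine (\ref{tau}), which gives $1-|r|^2 \sim e^{-2\tau/\e}$, with the third inequality of Proposition \ref{prop phi}, yielding $(1-|r|^2)e^{-2\widehat\phi/\e} \sim e^{-2(\tau+\widehat\phi)/\e} \to 0$. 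The remaining pieces of $\Sigma_S$ are easy: on $(0,+\infty)$ the jump $\sigma_1$ matches $v^{(\infty)}$ exactly, and on $(-\infty,-1-\delta)$ the bound (\ref{ref2}) gives $r = \bigO(e^{-c/\e})$, so $v_T = I + \bigO(e^{-c/\e})$.

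The main obstacle is the uniform control of $\Im\phi$ on the lenses. This requires (i) a quantitative lower bound, not just positivity, and (ii) stability of this bound under small perturbations of $x$ away from $x^+$, which is needed because the statement asks for uniformity in $|x-x^+|<\delta_0$. Both are handled by combining the explicit formula (\ref{phi}) for $\phi$, which depends smoothly on $x$, with the strict inequality $\phi'(\lambda;x^+,t) > 0$ on the compact set $[-1,u-\delta']$; the lens can then be chosen as a contour along which $\Im\phi$ is pointwise bounded below by a positive multiple of the distance to $\{-1-\delta, u\}$. Once this uniform lower bound is in hand, the remaining estimates follow directly from the asymptotic results already quoted in Section \ref{section 2}.
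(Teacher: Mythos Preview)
Your argument is correct and is precisely the standard one: decompose $\Sigma_S$ into its components, use the strict inequalities of Proposition~\ref{prop phi} together with the reflection coefficient asymptotics (\ref{kappa1})--(\ref{tau}) to get exponential decay of the non-constant entries away from $U_u\cup U_v\cup\{0\}$, and invoke continuity in $x$ near $x^+$ to obtain uniformity for $|x-x^+|<\delta_0$. The paper itself does not give a proof here but simply refers to \cite[Proposition~3.3]{CG2}, where exactly this argument is carried out; so your proposal is in fact a faithful reconstruction of the proof the paper is citing rather than an alternative route.
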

\begin{proof}
See \cite[Proposition 3.3]{CG2}.
\end{proof}
For $x>x^+ +\frac{\e\ln\e}{2\sqrt{v-u}}$, it is straightforward to verify that the convergence is also uniform on $\mathbb R\cap U_v$ for sufficiently small $U_v$.

\subsection{Outside parametrix}

Let us first ignore small neighborhoods of $u$ and $v$, and uniformly small jumps. Then we need to solve the following RH problem for the outside parametrix.
\subsubsection*{RH problem for $P^{(\infty)}$}
\begin{itemize}
\item[(a)] $P^{(\infty)}:\mathbb C\setminus [u, +\infty) \to
\mathbb C^{2\times 2}$ is analytic, \item[(b)] $P^{(\infty)}$
satisfies the jump conditions
\begin{align}
&P_+^{(\infty)}=P_-^{(\infty)}\sigma_1, &\mbox{ on $(0, +\infty)$},\\
&P_+^{(\infty)}= iP_-^{(\infty)}\sigma_1, &\mbox{ on $(u,0)\setminus\{v\}$},\label{RHP Pinfty b}
\end{align}
\item[(c)]$P^{(\infty)}$ has the following behavior as $\lambda\to\infty$,
\begin{equation}P^{(\infty)}(\lambda)=(I+\bigO(\lambda^{-1}))
\begin{pmatrix}1&1\\ i(-\lambda)^{1/2} & -i(-\lambda)^{1/2}\end{pmatrix}
. \label{RHP Pinfty c}\end{equation}
\end{itemize}

This RH problem is solved by
\begin{equation}
\label{def Pinfty0}
P_0^{(\infty)}(\lambda)=(-\lambda)^{1/4}(u-\lambda)^{-\sigma_3
/4}N, \qquad N=\begin{pmatrix}1&1\\ i& -i
\end{pmatrix}.
\end{equation}
This solution is bounded near $v$. Depending on the value of \begin{equation}y:=2\sqrt{v-u}\, \frac{x-x^+}{\e\ln\e},\end{equation} we will need an
outside parametrix which has singular behavior near $v$.
Therefore we observe that
\begin{equation}
\label{def Pinftyk}
P_k^{(\infty)}(\lambda)=(-\lambda)^{1/4}(u-\lambda)^{-\sigma_3
/4}ND(\lambda)^{\mp k\sigma_3}, \qquad\mbox{ for $\pm\Im\lambda>0$,}
\end{equation}
is a solution to the RH problem for $P^{(\infty)}$ for any $k\in\mathbb N$,
with
\begin{equation}
\label{def D}
D(\lambda)=\dfrac{\sqrt{\lambda-u}-\sqrt{v-u}}{\sqrt{\lambda-u}+\sqrt{v-u}}.
\end{equation}
Indeed this follows from the fact that $D$ is analytic on $\mathbb C\setminus(-\infty,u]$, with
\begin{align}
&D_+(\lambda)D_-(\lambda)=1,&\mbox{ for $\lambda<u$,}\\
&D(\infty)=1.
\end{align}
Near $v$, we have that
$P_k^{(\infty)}(\lambda)\lambda^{\pm k\sigma_3}$ is bounded for $\pm\Im\lambda>0$.

\medskip

The appropriate choice for $k$ turns out to be as follows:
if $y\leq 0$, we take $k=0$, and if $y\geq 0$, we let $k$ be the non-negative integer closest to $y$ (for the half integers, we may choose $k=y-\frac{1}{2}$ or $k=y+\frac{1}{2}$), so $y-\frac{1}{2}\leq k\leq y+\frac{1}{2}$.
As $\lambda\to\infty$, we have
\begin{multline}\label{RHP Pinfty c2}P_k^{(\infty)}(\lambda)=\left(I+\frac{u}{4\lambda}\sigma_3
+\bigO(\lambda^{-2})\right)
\begin{pmatrix}1&1\\ i(-\lambda)^{1/2} & -i(-\lambda)^{1/2}\end{pmatrix}\\
\times \ \left(I-\frac{2ik\sqrt{v-u}}{(-\lambda)^{1/2}}\sigma_3+\bigO(\lambda^{-1})\right)
.\end{multline}

\subsection{Local parametrix near $u$}
In \cite[Section 3.5]{CG2}, the Airy function was used to construct a local parametrix in a neighborhood $U_u$ of the point $u$. If conditions (\ref{trailing1})-(\ref{trailing2}) hold, it was proved that the constructed parametrix satisfies the following properties.

\subsubsection*{RH problem for $P_u$}
\begin{itemize}
\item[(a)] $P_u$ is analytic in $\overline{U_u}\setminus\Sigma_S$.
\item[(b)] $P_u$ satisfies the jump conditions
\begin{align}
&P_{u,+}(\lb)=P_{u,-}(\lb)\begin{pmatrix}1&ie^{\frac{2i}{\e}\phi(\lb;x)}\\0&1\end{pmatrix},&\mbox{ as $\lb\in\Sigma_1\cap U_u$},\\
&P_{u,+}(\lb)=P_{u,-}(\lb)\begin{pmatrix}1&0\\ie^{\frac{-2i}{\e}\phi^*(\lb;x)}&1\end{pmatrix}, &\mbox{ as $\lb\in\Sigma_2\cap U_u$},\\
&P_{u,+}(\lb)=P_{u,-}(\lb)\begin{pmatrix}e^{\frac{-2i}{\e}\phi_+(\lb;x)}&i\\i&0\end{pmatrix}, &\mbox{ as $\lb>u$.}
\end{align}
\item[(c)] As $\e\to 0$ and $x\to x^+$ in such a way that $x-x^+=\bigO(\e^{2/3})$, $P_u$ matches with $P^{(\infty)}$ in the following way, \begin{equation}\label{matching P u}
P_u(\lambda)=P^{(\infty)}(\lambda)\left[I-\frac{is(\lambda;x)^2}{4\e (-\zeta(\lambda))^{1/2}}\sigma_3-\frac{s(\lambda;x)}{4\zeta}\sigma_1+\bigO(\e)\right], \qquad\mbox{ for $\lambda\in\partial U_u$.}
\end{equation}
Here $s(\lambda;x)$ is an analytic function of $\lambda\in \overline{U_u}$ with
\begin{equation}
\label{su}
s(u;x)=\left(-f'_L(u)-6t\right)^{-1/3}(x-x^+).
\end{equation}
\end{itemize}
In the (stronger) double scaling limit where $\e\to 0$ and $x\to x^+$ in such a way that $x-x^+=\bigO(\e\ln\e)$, it follows that
\begin{equation}\label{matching P u2}
P_u(\lambda)=P^{(\infty)}(\lambda)\left[I+\bigO(\e\ln^2\e)\right], \qquad\mbox{ for $\lambda\in\partial U_u$.}
\end{equation}
For the explicit construction of $P_u$ in terms of the Airy function, we refer to \cite[Section 3.5]{CG2}. It will be important in what follows that, in the limit $\e\to 0$, the jump matrices for $P_u$ are equal to those for $S$ up to an error of order $\bigO(\e)$. This follows from (\ref{vS}) and the asymptotic formulas (\ref{kappa1}) and (\ref{tau}) for the reflection coefficient.

\subsection{Local parametrix near $v$}
The construction of the local parametrix will depend on the value of
\begin{equation}
y=2\sqrt{v-u}\, \frac{x-x^+}{\e\ln\e}.
\end{equation}
If $y\leq -1$, there is no need to construct a separate local parametrix near $v$ since the jump matrix for $S$ is equal to the one for $P^{(\infty)}$ up to an $\bigO(\e\ln^2\e)$ error; in this case we write for notational convenience $P_v(\lambda)=I$. Let us assume now that $y>-1$.
The aim of this section is to construct a local parametrix in a small neighborhood $U_v$ of $v$ having approximately the same jump property as $S$ has near $v$ (in the limit $\e\to 0$), and matching with the outside parametrix at $\partial U_v$. Substituting the semiclassical asymptotics (\ref{kappa1}) and (\ref{tau}) for the reflection coefficient, the jump matrix for $S$ given in (\ref{vS}) behaves as follows when $\e\to 0$,
\begin{equation}
v_S(\lambda)=(I+\bigO(\e))\begin{pmatrix}e^{\frac{2}{\e}\widehat\phi(\lb;x)}
&i\\
i&0
\end{pmatrix}.\end{equation} This brings us to the RH problem for the local parametrix.
\subsubsection*{RH problem for $P_v$}
\begin{itemize}
\item[(a)] $P$ is analytic in $\overline U_v\setminus \mathbb R$,
\item[(b)] $P$ has the jump condition\begin{equation}\label{RHP P:b}
P_+(\lambda)=P_-(\lambda)\begin{pmatrix}e^{\frac{2}{\e}\widehat\phi(\lb;x)}
&i\\
i&0
\end{pmatrix},\end{equation}
\item[(c)] if we let $\e\to 0$ and simultaneously $x\to x^+$ in such a way that
\begin{equation}
|x-x^+|<M\e |\ln\e|,
\end{equation}
we have
\begin{equation}\label{RHP P:c}
P(\lambda)=P^{(\infty)}(\lambda)(1+o(1)), \qquad \mbox{ as $\lambda\in\partial U_v$.}
\end{equation}
\end{itemize}

We will construct $P$ explicitly in terms of a model RH problem built out of Hermite polynomials.

\subsubsection{Model RH problem}
Define $\Psi$ by
\begin{multline}\label{def Psi}
\Psi(\zeta;k)={\small \begin{pmatrix}
       \frac{1}{h_k}H_k(\zeta) & \frac{1}{2\pi i h_k}            \int_{\mathbb R} \frac{H_k(u) e^{-u^2}}{u-\zeta}\,du
        \\[3ex]
         -2\pi i h_{k-1} H_{k-1}(\zeta) &-h_{k-1}
            \int_{\mathbb R}\frac{H_{k-1}(u) e^{-u^2}}{u-\zeta}\,du
    \end{pmatrix}e^{-\frac{\zeta^2}{2}\sigma_3}},
    \\ \mbox{for $\zeta\in\mathbb C \setminus\mathbb
    R$,}
\end{multline}
where $H_k$ denotes the degree $k$ Hermite polynomial, orthonormal
with respect to the weight $e^{-x^2}$ on $\mathbb R$. The leading coefficient of the normalized polynomial $H_k$ is equal to
\begin{equation}
\label{hk}
h_k=\frac{2^{k/2}}{\pi^{1/4}\sqrt{k!}},
\end{equation}
 and we agree $H_{-1}=h_{-1}=0$.
$\Psi=\Psi(\zeta;k)$ solves the following RH problem for $k\in\mathbb N\cup\{0\}$, which is a particular case of the RH problem for orthogonal polynomials discovered in \cite{FIK}.

\subsubsection*{RH problem for $\Psi$}
\begin{itemize}
\item[(a)]$\Psi:\mathbb C\setminus\mathbb R\to\mathbb C^{2\times
2}$ is analytic. \item[(b)] $\Psi$ has continuous boundary values for $\zeta\in\mathbb R$, related by the condition
\begin{equation}\label{jump
Psi}\Psi_+(\zeta)=\Psi_-(\zeta)\begin{pmatrix}1&1\\0&1
\end{pmatrix}.\end{equation}
\item[(c)]As $\zeta\to\infty$, we have
\begin{equation}\label{RHP Psi:c}
\Psi(\zeta)={\small\left(I-\dfrac{1}{\zeta}
\begin{pmatrix}
0&\frac{1}{2\pi i}h_{k}^{-2}\\
2\pi i h_{k-1}^2&0\end{pmatrix}+\frac{1}{\zeta^2}\begin{pmatrix}\frac{k(k-1)}{4}&0\\0&-\frac{k(k+1)}{4}\end{pmatrix}+
\bigO\left(\zeta^{-3}\right)\right)}\zeta^{k\sigma_3}e^{-\frac{\zeta^2}{2}\sigma_3}.\end{equation}
\end{itemize}
The further terms in the large-$\zeta$ expansion can be calculated explicitly, but are unimportant for us.

\subsubsection{Modified model RH problem}

In order to have a model RH problem which resembles the RH problem for $P$, we let
\begin{equation}
\Phi(\zeta;k, s)=
\begin{cases}
e^{\frac{s^2}{2}\sigma_3} e^{-\frac{i\pi}{4}\sigma_3}\Psi(\zeta+s;k)e^{\frac{i\pi}{4}\sigma_3}i\sigma_1,&\mbox{ as $\Im\zeta>0$,}\\
e^{\frac{s^2}{2}\sigma_3}e^{-\frac{i\pi}{4}\sigma_3}\Psi(\zeta+s;k)e^{\frac{i\pi}{4}\sigma_3},&\mbox{ as $\Im\zeta<0$.}
\end{cases}
\end{equation}
It is straightforward to check the RH conditions that are satisfied by $\Phi$.

\subsubsection*{RH problem for $\Phi$}
\begin{itemize}
\item[(a)]$\Phi:\mathbb C\setminus\mathbb R\to\mathbb C^{2\times
2}$ is analytic. \item[(b)] For $x\in\mathbb R$,
\begin{equation}\label{jump
Phi}\Phi_+(x)=\Phi_-(x)\begin{pmatrix}1&i\\i&0
\end{pmatrix}.\end{equation}
\item[(c)]$\Phi$ behaves as follows as $\zeta\to\infty$,
\begin{align}\label{RHP Phi:c1}&\Phi(\zeta)=\widehat\Phi(\zeta)\zeta^{k\sigma_3}
e^{-\frac{\zeta^2+2s\zeta}{2}\sigma_3}i\sigma_1, &\mbox{ as
$\zeta\to\infty$, $\Im\lambda>0$},\\
\label{RHP Phi:c2}&\Phi(\zeta)=\widehat\Phi(\zeta)\zeta^{k\sigma_3}
e^{-\frac{\zeta^2+2s\zeta}{2}\sigma_3}, &\mbox{ as
$\zeta\to\infty$, $\Im\lambda<0$,}\end{align}
where $\widehat\Phi$ has the asymptotic expansion
\begin{equation}\label{expansion Phi}
\widehat\Phi(\zeta)=I+\frac{1}{\zeta}\widehat\Phi_1+\frac{1}{\zeta^2}\widehat\Phi_2+\bigO(\zeta^{-3}),
\end{equation}
with
\begin{equation}\label{Phihat1}
\widehat\Phi_1=
\begin{pmatrix}ks&e^{s^2}\frac{1}{2\pi  h_k^2}\\
e^{-s^2}2\pi h_{k-1}^2&-ks\end{pmatrix},
\end{equation}
\begin{equation}\label{Phihat2}
\widehat\Phi_2=\begin{pmatrix}\frac{k(k-1)}{2}(s^2+\frac{1}{2})&-(k+1)se^{s^2}\frac{1}{2\pi  h_k^2}\\
(k-1)se^{-s^2}2\pi h_{k-1}^2&\frac{k(k+1)}{2}(s^2-\frac12))\end{pmatrix}.
\end{equation}
\end{itemize}

\subsubsection{Construction of the parametrix}

We construct $P$ in the form
\begin{equation}\label{def P}
P(\lambda)=E_k(\lambda)\Phi(\e^{-1/2}\zeta(\lambda);k, \e^{-1/2}s(\lambda))e^{\pm \frac{1}{\e}\hat\phi(\lambda)\sigma_3},\qquad\mbox{ as $\pm\Im\lambda>0$,}
\end{equation}
where $\zeta$ is a real conformal mapping which maps $v$ to $0$, $s$ is analytic near $v$, and
\begin{equation}\label{def E}
E_k(\lambda)=\begin{cases}-iP_k^{(\infty)}(\lambda)\sigma_1\zeta(\lambda)^{-k\sigma_3}\e^{-\frac{\Delta_k}{2}\sigma_3},&\mbox{ as $\Im\lambda>0$,}\\
P_k^{(\infty)}(\lambda)\zeta(\lambda)^{-k\sigma_3}\e^{-\frac{\Delta_k}{2}\sigma_3},&\mbox{ as $\Im\lambda<0$,}
\end{cases}
\end{equation}
which defines $E_k$ analytically near $v$, with
\begin{equation}E_k(v)=(-v)^{1/4}(v-u)^{-\sigma_3
/4}e^{-\frac{\pi i}{4}\sigma_3}N\zeta'(v)^{-k\sigma_3}\e^{\frac{-\Delta_k}{2}\sigma_3}4^{-k\sigma_3}(v-u)^{-k\sigma_3}.
\end{equation}
As before we take
\begin{equation}
\label{def y2}
y=\frac{2}{\e\ln\e}\sqrt{v-u}(x-x^+),
\end{equation}
and let $k$ be the non-negative integer closest to $y$, and $\Delta_k:=y-k$.
Using (\ref{RHP P:b}), (\ref{jump Phi}), and (\ref{def P}), one easily verifies that $P$ satisfies the correct jump condition near $v$.

\medskip

In order to have the right matching (\ref{RHP P:c}) between $P$ and $P^{(\infty)}$, we need to exploit the remaining freedom in choosing the conformal mapping $\zeta(\lambda)$ and the analytic function $s(\lambda)$ in such a way that
\begin{equation}\label{condition zeta s}
2\widehat\phi(\lambda; x)=-\zeta(\lambda)^2-2s(\lambda; x)\zeta(\lambda)-y \e\ln\e.
\end{equation}
If we take a look at $\widehat\phi$ in (\ref{hatphi}), we see that
\begin{multline}\label{hatphi1}
\widehat\phi(\lambda;x)=-\sqrt{v-u}(x-x^+)-(\sqrt{\lambda-u}-\sqrt{v-u})(x-x^+)\\+\int_u^{\lambda}(f_L'(\xi)+6t)\sqrt{\lambda-\xi}d\xi.
\end{multline}
Let us now define $\zeta$ by
\begin{equation}\label{def f}
\zeta(\lambda)^2=-2\widehat\phi(\lambda; x^+)=-2\int_u^{\lambda}(f_L'(\xi)+6t)\sqrt{\lambda-\xi}d\xi,
\end{equation}
so that
\begin{equation}\label{fv}
\zeta(v)=0, \qquad \zeta'(v)=(v-u)^{1/4}(-\theta'(v;u))^{1/2}>0.
\end{equation}
Furthermore let
\begin{equation}\label{def s}
s(\lambda;x)=\frac{1}{\zeta(\lambda)}(\sqrt{\lambda-u}-\sqrt{v-u})(x-x^+),
\end{equation}
which defines $s$ analytically near $v$ with
\begin{equation}
s(v; x)=\frac{x-x^+}{2(v-u)^{3/4}(-\theta'(v;u))^{1/2}}.
\end{equation}
Now by (\ref{hatphi1}), (\ref{def f}), and (\ref{def s}), it follows that (\ref{condition zeta s}) holds.

\medskip

We constructed the parametrix in such a way that for $\lambda\in\partial U_v$,
\begin{multline}
P(\lambda)P^{(\infty)}(\lambda)^{-1}=E_k(\lambda)
\left(I+\frac{\e^{1/2}}{\zeta(\lambda)}\widehat\Phi_1+\frac{\e}{\zeta(\lambda)^2}\widehat\Phi_2+\bigO(\e^{3/2})\right)E_k(\lambda)^{-1}.
\end{multline}
If we choose $k$ to be the non-negative integer closest to $y$, then it follows from (\ref{def E}) that $P$ and $P^{(\infty)}$ have a good matching at $\partial U_v$ as long as $y$ is not to close to a half positive integer. In order to have a uniform matching, also when $y$ approaches a half integer, we need to improve the parametrices.

\subsection{Improvement of parametrices}

The goal of this section is to improve the local parametrices and the outside parametrix in such a way that they match uniformly for $y$ bounded. This section is inspired by \cite{BertolaLee}. Let us define, if $y>-1$,
\begin{align}
&\label{Pinftyimp}\widetilde P^{(\infty)}(\lambda)=\left(I+\frac{C(x,t,\e)}{\lambda-v}\right)P^{(\infty)}(\lambda),\\
&\widetilde P_u(\lambda)=\left(I+\frac{C(x,t,\e)}{\lambda-v}\right)P_u(\lambda),
\end{align}
where $C$ is a nilpotent matrix (so that the determinant of $Q$ is identically $1$) which we will determine below.
With those improved definitions of the parametrices, $\widetilde P^{(\infty)}$ and $\widetilde P_u$ satisfy the same RH conditions as before (see (\ref{RHP Pinfty b})-(\ref{RHP Pinfty c}) for $P^{(\infty)}$). If $C$ is bounded for small $\e$, we have by (\ref{matching P u2})
\begin{equation}
\widetilde P_u(\lambda)\widetilde P^{(\infty)}(u)^{-1}=I+\bigO(\e\ln^2\e)),\qquad\mbox{ as $\lambda\in\partial U_u$,}
\end{equation}
if $\e\to 0$.
Next we define the improved local parametrix near $v$ as follows,
\begin{multline}
\label{def P imp1}
\widetilde P_v(\lambda)=\widetilde E_k(\lambda)
\begin{pmatrix}1&-\delta\e^{\frac{1}{2}}e^{s^2}\frac{1}{2\pi h_k^2\zeta(\lambda)}\\
-(1-\delta)\e^{\frac{1}{2}}e^{-s^2}\frac{2\pi h_{k-1}^2}{\zeta(\lambda)}&1\end{pmatrix}\\
\times\quad
\Phi(\e^{-1/2}\zeta(\lambda);k, \e^{-1/2}s(\lambda))e^{\pm \frac{1}{\e}\hat\phi(\lambda)\sigma_3},
\end{multline}
as $\pm\Im\lambda>0$, with
\begin{equation}
\widetilde E_k(\lambda)=\left(I+\frac{C(x,t,\e)}{\lambda-v}\right)E_k(\lambda),
\end{equation}
and with
\begin{equation}\delta=1,\qquad\mbox{ if $k\leq y\leq k+\frac{1}{2}$}, \qquad \delta=0,\qquad \mbox{ if $k-\frac{1}{2}\leq y<k$}.\end{equation}
In any case one of the off-diagonal entries of the second factor on the right hand side of (\ref{def P imp1}) vanishes.
If we let $\e\to 0$ in such a way that $y$ remains bounded, we have that $s(\lambda;x)=\bigO(\e\ln\e)$, and it follows from
(\ref{RHP Phi:c1})-(\ref{RHP Phi:c2}), (\ref{def P}), and (\ref{def E}) that
\begin{equation}
\widetilde P_v(\lambda)\widetilde P^{(\infty)}(\lambda)^{-1}=I+\bigO(\e^{1/2+|\Delta_k|}),\qquad\mbox{ as $\lambda\in\partial U_v$.}
\end{equation}
Note that $\widetilde E_k$ is not analytic at $v$: it has a simple pole at $v$.
However we can choose the matrix $C$ such that $\widetilde P$ is bounded near $v$.
After a straightforward calculation, it turns out that this is the case if
\begin{equation}\label{def C1}
C=E_k(v)\begin{pmatrix}0&\delta d_1\\(1-\delta)c_1&0\end{pmatrix}
\left[E_k(v)\begin{pmatrix}1&-\delta d_2\\-(1-\delta)c_2&1\end{pmatrix}-E_k'(v)\begin{pmatrix}0&\delta d_1\\(1-\delta)c_1&0\end{pmatrix}\right]^{-1},
\end{equation}
with
\begin{align}
&c_1=c_1(k;\e)=\frac{\e^{\frac{1}{2}}e^{-s^2(v)}2\pi h_{k-1}^2}{\zeta'(v)},\\
&c_2=c_2(k;\e)=-\frac{c_1\zeta''(v)}{2\zeta'(v)}-2s(v)s'(v)c_1,\\
&d_1=d_1(k;\e)=\frac{\e^{\frac{1}{2}}e^{s^2(v)}}{2\pi h_k^2\zeta'(v)},\\
&d_2=d_2(k;\e)=-\frac{d_1\zeta''(v)}{2\zeta'(v)}+2s(v)s'(v)d_1.
\end{align}
Note that $c_1(k+1;\e)d_1(k;\e)=\frac{\e}{\zeta'(v)^2}$, and that the matrix $C$ bounded for small $\epsilon$.
In particular, writing $E=E_k$, we have
\begin{multline}
\label{def C imp1}
C_{12}=-(1-\delta)\frac{c_1 E_{12}(v)^2}{\det E(v)-c_1\left[E_{12}'(v)E_{22}(v)-E_{12}(v)E_{22}'(v)\right]}
\\
+\delta\frac{d_1 E_{11}(v)^2}{\det E(v)-d_1\left[E_{11}(v)E_{21}'(v)-E_{11}'(v)E_{21}(v)\right]}.
\end{multline}
Observe that
\begin{align}
&\det E(v)=-2i\sqrt{-v},\\
&E_{11}(v)^2=-i\sqrt{\frac{-v}{v-u}}\left(4(v-u)\right)^{-2k}\zeta'(v)^{-2k}\e^{-\Delta_k},\\
&E_{12}(v)^2=-i\sqrt{\frac{-v}{v-u}}\left(4(v-u)\right)^{2k}\zeta'(v)^{2k}\e^{\Delta_k},\\
&E_{11}(v)E_{21}'(v)-E_{11}'(v)E_{21}(v)=2i\sqrt{-v}\left(4\zeta'(v)(v-u)\right)^{-2k-1}\zeta'(v)\e^{-\Delta_k},\\
&E_{12}'(v)E_{22}(v)-E_{12}(v)E_{22}'(v)=2i\sqrt{-v}\left(4\zeta'(v)(v-u)\right)^{2k-1}\zeta'(v)\e^{\Delta_k}.
\end{align}
This leads to
\begin{multline}
\label{def C imp1b}
C_{12}(k,\Delta_k)=-2(1-\delta)\sqrt{v-u}\frac{c_1\gamma^{2k-1}\zeta'(v)\e^{\Delta_k}}{1+c_1\gamma^{2k-1}\zeta'(v)\e^{\Delta_k}}\\
+2\delta\sqrt{v-u}\frac{d_1\gamma^{-2k-1}\zeta'(v)\e^{-\Delta_k}}{1+d_1\gamma^{-2k-1}\zeta'(v)\e^{-\Delta_k}}+\bigO(\e^2\ln^2\e),
\end{multline}
in the double scaling limit where $\e\to 0$ and $x-x^+=\bigO(\e\ln\e)$,
with
\begin{equation}\label{gamma}
\gamma=4\zeta'(v)(v-u).
\end{equation}

\subsection{Final RH problem}

Now we define $R$ in such a way that it has jumps that are uniformly $I+\bigO(\e^{1/2})$ in the double scaling limit where $\e\to 0$, $x-x^+=\bigO(\e\ln\e)$, also when $y$ is close to a half integer. Therefore we let
 \begin{equation}
\label{def R imp}
R(\lambda; x,\e)=\begin{cases}
S(\lambda;x)\widetilde P^{(\infty)}(\lambda)^{-1}, &\mbox{ as $\lambda\in\mathbb C\setminus (U_u\cup U_v),$}\\
S(\lambda;x)\widetilde P_u(\lambda;x)^{-1}, &\mbox{ as $\lambda\in U_u,$}\\
S(\lambda;x)\widetilde P_v(\lambda;x)^{-1}, &\mbox{ as $\lambda\in U_v$,}
\end{cases}
\end{equation}
where we use the notations $\widetilde P_u=P_u$ and $\widetilde P_v=I$ in the case where $y\leq -1$.
Obviously, $R$ is analytic in $\mathbb C\setminus\left(\Sigma_S\cup\partial U_u\cup\partial U_v\right)$.
On $\partial U_u\cup\partial U_v$, the jump matrix for $R$ is close to the identity matrix because of the matching of the (improved) local parametrices with the outside parametrices. Outside the disks $U_u$ and $U_v$, the decay of the jump matrix to $I$ is inherited from the decay of the jump matrix for $S$. Inside the disks, there are residual jumps on $\Sigma_S$ because the jump matrices for the local parametrices are not exactly the same as the ones for $S$. However, using the small $\e$ asymptotics for the reflection coefficient, one verifies easily that they are of the form $I+\bigO(\e)$ as $\e\to 0$.

\subsubsection*{RH problem for $R$}
\begin{itemize}
\item[(a)] $R$ is analytic in $\mathbb C\setminus(\Sigma_S\cup \partial U_u\cup \partial U_v)$.
\item[(b)] $R$ has the jump condition $R_+(\lambda)=R_-(\lambda)v_R(\lambda)$ for $\lambda\in \Sigma_S\cup \partial U_u\cup \partial U_v$, where
 \begin{align}
 &v_R(\lambda)=I+\bigO(e^{-\frac{c}{\e}}), &\mbox{ for $\lambda\in\Sigma_S\setminus(U_u\cup U_v)$,}\\
 &v_R(\lambda)=I+\bigO(\e), &\mbox{ for $\lambda\in\Sigma_S\cap(U_u\cup U_v)$,}\\
  &v_R(\lambda)=I+\bigO(\e\ln^2\e), &\mbox{ for $\lambda\in\partial U_u$,}\\
   &v_R(\lambda)=I+\bigO(\e^{\frac{1}{2}+|\Delta_k|}), &\mbox{ for $\lambda\in\partial U_v$,}
 \end{align}
 in the double scaling limit where $\e\to 0$ in such a way that $y$ remains bounded. We choose the clockwise orientation for $\partial U_u$ and $\partial U_v$.
\item[(c)] As $\lambda\to\infty$, we have
\begin{equation}\label{RHP R:c}R(\lambda)=I+\frac{R_1}{\lambda}+\bigO(\lambda^{-2}).\end{equation}\end{itemize}

All jumps are thus of the form $I+\bigO(\e^{1/2})$, and if $y$ is a half integer or if $y\leq -1$, the situation improves to $I+\bigO(\e\ln^2\e)$.
Taking a closer look at the jump for $R$ on $\partial U_v$, and using $s(\lambda;x)=\bigO(\e\ln\e)$, we obtain using (\ref{def P imp1}) and (\ref{expansion Phi}) that
\begin{eqnarray*}
v_R(\lambda)&=&\widetilde P_v(\lambda)\widetilde P^{(\infty)}(\lambda)^{-1}\nonumber \\
&=&\widetilde E_k(\lambda)\left(I+\frac{\e^{1/2}}{\zeta(\lambda)}\begin{pmatrix}0&(1-\delta)\frac{1}{2\pi h_k^2}\\
\delta 2\pi h_{k-1}^2
&0\end{pmatrix} \right.\\
&&\qquad\qquad\qquad \left. +\frac{\e}{\zeta(\lambda)^2}\begin{pmatrix}\frac{k(k-2)}{2}&0\\
0&-\frac{k(k-2)}{2}\end{pmatrix}+\bigO(\e^{\frac{3}{2}}\ln\epsilon)\right)\widetilde E_k(\lambda)^{-1}.
\end{eqnarray*}
By (\ref{def C1}), this leads to
\begin{equation}\label{vR3}
v_R(\lambda)=I+\frac{\e^{1/2}}{\zeta(\lambda)}E_k(\lambda)\begin{pmatrix}0&(1-\delta) \frac{1}{2\pi h_k^2}\\
\delta 2\pi h_{k-1}^2
&0\end{pmatrix}E_k(\lambda)^{-1}+\bigO(\e).
\end{equation}
We can write
\[v_R(\lambda)=I+\e^{\frac{1}{2}+|\Delta_k|}v_R^{(1)}(\lambda)+\bigO(\e\ln^2\e),\qquad\mbox{ for $\lambda\in \Sigma_S\cup \partial U_u\cup \partial U_v$},\] uniformly for $\e\to 0$ and $y$ bounded.
By a standard procedure for small-norm RH problems \cite{DKMVZ1, KMM}, it follows that $R$ has a similar expansion in the double scaling limit:
\begin{equation}\label{Rexpansion}
R(\lambda)=I+\e^{1/2+|\Delta_k|}R^{(1)}(\lambda)+\bigO(\e\ln^2\e).
\end{equation}
Compatibility of (\ref{vR3}) with (\ref{Rexpansion}) and the jump condition $R_+=R_-v_R$ on $\partial U_v$ gives the relation
\begin{equation}
R_+^{(1)}(\lambda)-R_-^{(1)}(\lambda)=v_R^{(1)}(\lambda),\qquad\mbox{for $\lambda\in\partial U_v$.}
\end{equation}
In addition $R^{(1)}$ is analytic in $\mathbb C\setminus\partial U_v$, and $R^{(1)}(\infty)=0$. The unique function which satisfies those (additive) RH conditions is given by
\begin{align}
&\label{R1out}R^{(1)}(\lambda)=\frac{\Res(v_R^{(1)};v)}{\lambda-v},&\mbox{ if $\lambda\in\mathbb C\setminus U_v$,}\\
&R^{(1)}(\lambda)=\frac{\Res(v_R^{(1)};v)}{\lambda-v}-v_R^{(1)}(\lambda),&\mbox{ if $\lambda\in U_v$,}
\end{align}
and consequently we have the following asymptotics for $R_1$ defined by (\ref{RHP R:c}),
\begin{equation}\label{as R1}
R_1(x,t,\e)=\e^{\frac{1}{2}+|\Delta_k|}\Res(v_R^{(1)};v)+\bigO(\e\ln^2\e).
\end{equation}

Using the definition (\ref{def R imp}) of $R$, the improved outside parametrix given by (\ref{RHP Pinfty c2}), and the expansion (\ref{Pinftyimp}) of the outside parametrix at infinity, we obtain that
\begin{equation}
S_{11}(\lambda;x,t,\e)=1-i(R_{1,12}(x,t,\e)+C_{12}+2k\sqrt{v-u})\frac{1}{\sqrt{-\lambda}}+\bigO(\lambda^{-1}),\qquad \mbox{ as $\lambda\to\infty$,}
\end{equation}
which implies by (\ref{uS}) that
\begin{equation}\label{uR  imp1}
u(x,t,\epsilon)=u-2\epsilon\partial_x\left(
R_{1,12}(x,t,\e)+C_{12}(x,t,\epsilon)\right).
\end{equation}
Substituting the asymptotics obtained in (\ref{as R1}) gives
\begin{equation}\label{uR  imp2}
u(x,t,\epsilon)=u-2\epsilon\partial_x\left(
\e^{\frac{1}{2}+|\Delta_k|}\Res(v_R^{(1)};v)_{12}+C_{12}(x,t,\epsilon)\right)+\bigO(\e\ln^2\e).
\end{equation}
It is not so obvious that it is allowed to formally take derivatives of the asymptotics obtained for $R$.
However $R$ is analytic as a function of $y$, and the asymptotic expansion (\ref{Rexpansion}) can be shown to hold true for $y$ in a small complex neighborhood of any $y\in\mathbb R$. Using this property, (\ref{uR  imp2}) can be justified.

After some calculations we obtain
\begin{multline}\label{vR1}
2\e^{\frac{3}{2}}\partial_x [\e^{|\Delta_k|}\Res(v_R^{(1)};v)_{12}]\\=-8\e^{\frac{1}{2}+|\Delta_k|}(v-u)
\dfrac{|\Delta_k|}{\Delta_k}\left[-\dfrac{1-\delta}{2\pi h_k^2\gamma^{2k+1}}+2\pi \delta h_{k-1}^
2\gamma^{2k-1}\right],
\end{multline}
where $h_k$ is defined in (\ref{hk}) and $\gamma$ is defined in (\ref{gamma}).
By (\ref{def C imp1}),
\begin{multline}
\label{C1}2\e\partial_x C_{12}=-8(v-u)(1-\delta)\dfrac{\epsilon^{\frac{1}{2}+\Delta_k}2\pi h^2_{k-1}\gamma^{2k-1}}{\left( 1+\epsilon^{\frac{1}{2}+\Delta_k}2\pi h^2_{k-1}\gamma^{2k-1}\right)^2}\\
-8(v-u)\delta\dfrac{\epsilon^{\frac{1}{2}-\Delta_k}/(2\pi h^2_{k}\gamma^{2k+1})}{\left( 1+\epsilon^{\frac{1}{2}-\Delta_k}/(2\pi h^2_{k}\gamma^{2k+1})\right)^2}.
\end{multline}
Putting together the above formulas and using the fact that $\delta=1$ for $k\leq y\leq k+\frac{1}{2}$ and $\delta=0 $ for $k+\frac{1}{2}\leq y<k+1$ we arrive at the expression
\begin{equation}\label{uR imp3}
u(x,t,\epsilon)=u+2(v-u)\left[\sech^2(X_k)+4e^{-2X_{k-1}}\right]
+\bigO(\e\ln^2\e),
\end{equation}
if $k\leq y< k+1/2$, and at the expression
\begin{equation}\label{uR imp4}
u(x,t,\epsilon)=u+2(v-u)\left[\sech^2(X_k)+4e^{2X_{k+1}}\right]
+\bigO(\e\ln^2\e),
\end{equation}
if $k+1/2\leq y\leq k+1$,
where
\begin{equation}
X_k=\frac{1}{2}(\frac12-y+k)\ln\e-\ln(\sqrt{2\pi}
h_k)-(k+\frac12)\ln\gamma.
\end{equation}
We can write this also in the following more elegant form, which holds for all $|y|\leq M$:
\begin{equation}
u(x,t,\e)=u+2(v-u)\sum_{k=1}^{\infty}\sech^2(X_k)+\bigO(\e\ln^2\e).
\end{equation}
For each value of $y$, there are at most two of the $\sech^2$-terms which are bigger than $\bigO(\e\ln^2\e)$, all the others are absorbed by the error term.
This proves Theorem \ref{main theorem}.

\section*{Acknowledgements}
TC is a Postdoctoral Fellow of the Fund for Scientific Research - Flanders (Belgium), and was also supported by
Belgian Interuniversity Attraction Pole P06/02, and by the ESF program MISGAM.
TG  acknowledges support by the ESF program MISGAM.
TG and TC acknowledge support by ERC Advanced Grant FroMPDEs.

\obeylines \texttt{Tom Claeys
    Cit\'e Scientifique - Laboratoire Painlev\'e M2
    F-59655 Villeneuve d'Ascq, FRANCE
E-mail: tom.claeys@math.univ-lille1.fr
\bigskip

Tamara Grava
    SISSA Via Beirut 2-4
    34014 Trieste, ITALY
    E-mail: grava@sissa.it }

\end{document}